\documentclass[letter,12pt]{article}
%Use for easier editing.  Uncomment below and replace affiliations when ready for typesetting ...

%\documentclass[prl,twocolumn]{revtex4} % Note: use preprint for large double spacing, reprint for a more realistic recreation of the journal's formating. switch column #s using  % onecolumn vs twocolumn		
														
\usepackage{amsfonts, amsmath, amsthm, amssymb}
\usepackage{color}
\usepackage{bbm}
\usepackage{graphicx}
\usepackage{graphics}
\usepackage{multirow}
\usepackage{epstopdf}
\usepackage{float}
\usepackage{subfig}
\usepackage[countmax]{subfloat}
%\usepackage{authblk} % NOTE: Can not use this with revtex4 (definition conflict)

%% margin notes

\usepackage{rotating}
\usepackage{fullpage} % NOTE: Can not use this with revtex 4 (makes 2000 page documents - not cool.)
\usepackage{setspace}
 \usepackage{xcolor}
 \DeclareGraphicsRule{.tiff}{png}{.png}{`convert #1 `dirname #1`/`basename #1 .tiff`-tiff-converted-to.png}

\usepackage{authblk}
\usepackage{xr}

% Resolved comments

%\newcommand{\rEcomment}[1]{\textcolor{red}{[ESB: #1]}}
%\newcommand{\rYcomment}[1]{\textcolor{red}{[YH: #1]}}
%\newcommand{\rJcomment}[1]{\textcolor{red}{[JT: #1]}}
%\newcommand{\rKcomment}[1]{\textcolor{red}{[KJ: #1]}}

\newcommand{\rEcomment}[1]{}
\newcommand{\rYcomment}[1]{}
\newcommand{\rJcomment}[1]{}
\newcommand{\rKcomment}[1]{}

% bold Latin

\newcommand{\bfA}{\mathbf{A}}
\newcommand{\bfB}{\mathbf{B}}
\newcommand{\bfC}{\mathbf{C}}
\newcommand{\bfD}{\mathbf{D}}
\newcommand{\bfE}{\mathbf{E}}

\newcommand{\bfH}{\mathbf{H}}
\newcommand{\bfI}{\mathbf{I}}

\newcommand{\bfM}{\mathbf{M}}

\newcommand{\bfP}{\mathbf{P}}

\newcommand{\bfS}{\mathbf{S}}

\newcommand{\bfU}{\mathbf{U}}

\newcommand{\bfW}{\mathbf{W}}
\newcommand{\bfX}{\mathbf{X}}

\newcommand{\bfa}{\mathbf{a}}
\newcommand{\bfb}{\mathbf{b}}

\newcommand{\bfd}{\mathbf{d}}

\newcommand{\bfu}{\mathbf{u}}

\newcommand{\bfx}{\mathbf{x}}
\newcommand{\bfy}{\mathbf{y}}

% tilde Latin
\newcommand{\At}{\tilde{A}}

 % avoid conflict with \ht
 % avoid conflict with \it

  % avoid conflict with \tt

\newcommand{\yt}{\tilde{y}}

% bold tilde Latin

\newcommand{\bfPt}{\mathbf{\tilde{P}}}

\newcommand{\bfxt}{\mathbf{\tilde{x}}}
\newcommand{\bfyt}{\mathbf{\tilde{y}}}

% Greek

\newcommand{\om}{\omega}

\newcommand{\bfLambda}{{\boldsymbol\Lambda}}
\newcommand{\bfTheta}{{\boldsymbol\Theta}}

\newcommand{\bfxi}{{\boldsymbol\xi}}

% control command
\newcommand{\beq}{\begin{equation}}
\newcommand{\eeq}{\end{equation}}
\newcommand{\beqr}{\begin{eqnarray}}
\newcommand{\eeqr}{\end{eqnarray}}
\newcommand{\beqrn}{\begin{eqnarray*}}
\newcommand{\eeqrn}{\end{eqnarray*}}
\newcommand{\beqn}{\begin{equation*}}
\newcommand{\eeqn}{\end{equation*}}
\newcommand{\bei}{\begin{itemize}}
\newcommand{\beii}{\begin{itemize} \item}
\newcommand{\eei}{\end{itemize}}
\newcommand{\ben}{\begin{enumerate}}
\newcommand{\een}{\end{enumerate}}
\newcommand{\bes}{\begin{small}}
\newcommand{\ees}{\end{small}}
\newcommand{\bec}{\begin{center}}
\newcommand{\eec}{\end{center}}
\newtheorem{theorem}{Theorem}[section]

\newtheorem{corollary}[theorem]{Corollary}

\theoremstyle{definition}

\theoremstyle{remark}

% functions

\newcommand{\EV}{\mathbf{E}} % Bold E for expected value
\newcommand{\EVb}[1]{\EV\left\{#1\right\}} % EV with fitted  CURLY brackets
\newcommand{\EVs}[1]{\EV\left[#1\right]} % EV with fitted  SQUARE brackets

% special variables
 % Absolute refractory period
 % Threshold
 % Reset % Note: do not use \vr - conflict with revtex

% Motif frequencies

\newcommand{\K}{{\boldsymbol\kappa}}
\newcommand{\M}{{\boldsymbol\mu}}
% terms

\newcommand{\ER}{Erd\"{o}s-R\'{e}nyi}
\newcommand{\WK}{Wiener-Khinchin}
\newcommand{\OU}{Ornstein-Uhlenbeck}
\newcommand{\BA}{Bar\'{a}basi-Albert}

%\externaldocument{hu_trousdale_local_paths_SI}

\makeatletter
\renewcommand*{\@fnsymbol}[1]{\ensuremath{\ifcase#1\or \dagger \or \dagger\or \ddagger\or
   \mathsection\or \mathparagraph\or \|\or **\or \dagger\dagger
   \or \ddagger\ddagger \else\@ctrerr\fi}}
\makeatother

\begin{document}

\title{Local paths to global coherence: Cutting networks down to size }
\date{\today}
\author[2]{Yu Hu$^{1}$, James Trousdale}
\author[2,3]{Kre\v{s}imir Josi\'{c}\thanks{These authors contributed equally.}}
\author[1,4,5]{Eric Shea-Brown$^\dagger$}
\affil[1]{Department of Applied Mathematics, University of Washington, Seattle, WA 98195}
\affil[2]{Department of Mathematics, University of Houston, Houston, TX 77204-5001}
\affil[3]{Department of Biology and Biochemistry, University of Houston, Houston, TX 77204-5001}
\affil[4]{Program in Neurobiology and Behavior, University of Washington, Seattle, WA 98195}
\affil[5]{Department of Physiology and Biophysics, University of Washington, Seattle, WA 98195}

\maketitle

% section numbering format
\renewcommand{\thesection}{\Roman{section}} 
\renewcommand{\thesubsection}{\Alph{subsection}}

\begin{abstract}
How does connectivity impact network dynamics? We address this question by linking network characteristics on two scales.  On the global scale we consider the coherence of overall network dynamics. We show that such \emph{global coherence} in activity can often be predicted from the \emph{local structure} of the network.  To characterize local network structure we use ``motif cumulants," a measure of the deviation of pathway counts from those expected in a minimal probabilistic network model.

We extend previous results in three ways.  First, we give a new combinatorial formulation of motif cumulants that relates to the allied concept in probability theory.  Second, we show that the link between global network dynamics and local network architecture is strongly affected by heterogeneity in network connectivity. However, we introduce a network-partitioning method that recovers a tight relationship between architecture and dynamics. Third, for a particular set of models we generalize the underlying theory to treat dynamical coherence at arbitrary orders (i.e. triplet correlations, and beyond). We show that at any order only a highly restricted set of motifs impact dynamical correlations.

\end{abstract}

\clearpage

\section{Introduction}

From genetics to neuroscience to the social 
world, networks of stochastic dynamical systems are ubiquitous.  The architecture of these networks is complex:  irregular but far from random, with an unexpected prevalence of specific connection features~\cite{Bonifazi:09,Song:2005,Perin:2011,Milo:2004jc,Larimer:2008}.  At the same time, networks produce complex patterns of collective dynamics~\cite{Pecora:1998uf,SS00,rinzel89}.  Here we explore the links between these two phenomena and provide  general principles that relate network architecture to collective dynamics.  

%How can the collective dynamics of a network be quantified?   
%
%\emph{The joint activity of pairs and groups of nodes is frequently described using pairwise and higher-order correlations (coherence)~\cite{Renart:2010,Pernice:2011,Pernice:2012,Lindner:2005, Schneidman:2003ej}. Specifically, consider the time dependent activity of a node $x_i(t)$.  Then, the pairwise correlation is $\cov(x_i(t),x_j(t))=\EVb{x_i(t)x_j(t)}-\EVb{x_i(t)}\EVb{x_j(t)}$.  One can also compare activities separated by a certain delay, $\cov(x_i(t),x_j(t+\tau))$, and thus acquire the correlation function (of $\tau$). Another useful generalization is to higher order correlations that consider three or more nodes at a time.  }
%
% \Kcomment{I think the above paragraph is too detailed. First of all, we define covariances after talking about correlations, which 
% is confusing.  We would have to normalize to clear this up, which brings up more problems. I suggest we use the following paragraph.}
% \Ycomment{I agree that the original paragraph has more details than probably needed. I'm fine with the new paragraph, just added a reference to a footnote [35] we already have.}
 
 The joint activity of pairs and groups of nodes is frequently described using pairwise~\cite{corr_def} and higher-order correlations (coherence)~\cite{Renart:2010,Pernice:2011,Pernice:2012,Lindner:2005, Schneidman:2003ej}.
But what do such measures of coherence tell us?  A high average correlation (across all node pairs) reflects approximate synchrony.  In some settings, this global synchrony is what matters for how strongly a network will ``cooperate" to influence a system downstream~\cite{Fries:2005,Sin+95,Salinas:2000,Die+99,piko01}.  Beyond the impact on downstream targets, synchrony can also have an impact on how information is encoded in network activity.  This has been widely studied in the neural networks of sensory pathways, which encode signals from the external world; here, synchronous fluctuations can either serve as a separate ``channel", or can modulate the amount of information that network responses can carry by shaping their overall signal-to-noise ratiocinates~\cite{Aver:2006,gawne93,Cohen:2011eh,Zohary:1994ei,Sompolinsky:2001hh,Abbott:1999ul}. 

\begin{figure}[b!]
\centering
\includegraphics[scale=1.4]{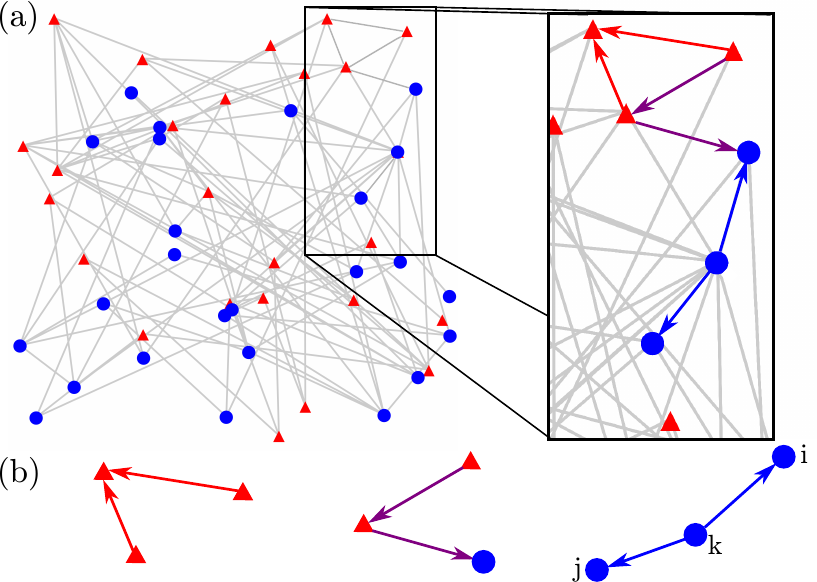}
\renewcommand{\figurename}{FIG.}
\caption{(a) We consider directed, recurrent, networks within which motifs of different orders, i.e. number of edges, are identified. The inset illustrates how these motifs are embedded in the graph. The different shape and color of nodes become relevant and are explained in Sec.~\ref{S:heter_sub}.\ref{S:heter_network}. (b) Left to right:  second order converging, chain, and diverging motifs, as highlighted with color in the inset of (a). Only the latter two contribute to correlations in the path expansion given by Eq.~\eqref{e:cov_epn}. }
\label{F:0}
\end{figure}

We thus turn to the question of relating coherent network dynamics to connectivity structure as described by a directed graph %with weighted edges 
specifying node interactions.  Despite significant progress~\cite{Renart:2010,Ginzburg:1994wm,Sej76,Pernice:2011,Pernice:2012}, this problem remains a challenge. One approach is to identify the key {\it local} connectivity features of a complex network that predict {\it global} levels of correlation ---  the averaged correlation across all nodes in the network. The local connectivity is characterized using specific pathways between subsets of nodes, or motifs. Formally, motifs are particular connectivity patterns (usually smaller graphs) that occur, possibly multiple times, in the graph of the network. Several example network {motifs} are shown in Fig.~\ref{F:0}.

How can motif structure be used to predict network-wide correlation?  An approximate expression relating correlations to the frequency of different types of network motifs has been derived previously~\cite{Pernice:2011,Trousdale:2012}.  Although this result lead to a number of insights, it is difficult to apply generally due to the combinatorial explosion of motifs that appear in the approximation~\cite{Pernice:2011,Trousdale:2012,Hu:2012vh}.  It is necessary to measure empirically the frequencies of many different motifs in order to apply the theory. In earlier work we sought to simplify the situation~\cite{Hu:2012vh}.  We used the frequency of a few, smaller motifs  to predict the frequency of larger motifs in the network.  As as a result we showed that the frequency of a few small motifs alone could predict network-wide correlation --- in many cases with a high level of accuracy.

However, three key questions remain unanswered. First, under what conditions can a set of small motifs be used to accurately infer
the frequency of large motifs?  Second, what features of network connectivity, or motifs, predict higher order correlations?  Third, when our earlier methods fail~\cite{Hu:2012vh} --- that is, when the frequency of small motifs alone does not provide accurate information about correlations --- 
is there a way to  still cut the dynamical complexity down to size?

In this paper we answer these questions. We first summarize, and where necessary reinterpret, our earlier results~\cite{Hu:2012vh} employing new combinatorial definitions: Borrowing ideas from probability theory we define motif moments and cumulants. This abstract approach both reveals the probabilistic structure of our underlying assumptions and allows us to immediately generalize our theory to link higher order correlations in network dynamics to graphical features described by frequencies of more complex motifs. Intriguingly, only a highly restricted set of motifs enter in expressions for dynamical correlations of any given order. We explicitly identify these motifs associated with every order. 
Finally, we apply our method to new types of networks, and show that heterogeneity in network connectivity can lead to a failure of the predictive approach in~\cite{Hu:2012vh}.  
However, even in this case an accurate approximation can be obtained if the network is correctly partitioned, and motif frequencies are
measured within and across the partitions.

Our results for coherence at both second and higher orders hold for stochastic networks where node interactions can be described using linear response, including linear SDEs (Ornstein-Uhlenbeck) and shot noise processes~\cite{Gardiner:1986}
on networks.  Moreover, our findings for second (but not higher) order coherence also hold for coupled point process systems; including networks of integrate--and--fire neurons~\cite{Trousdale:2012}, as well as linearly interacting point processes
(Hawkes models~\cite{Hawkes:1971-2,Pernice:2011}).

\section{Stochastic dynamics on networks}

\subsection{Model of stochastic dynamics on networks}

Stochastic networks of linearly interacting units can generally be described using 
\beq
\label{e:lin}
y_i(t)=x_i(t)+ A_i (t)* \sum_j\bfW_{ij} y_j(t).
\eeq
Here the activity of the $i^{th}$ node, $y_i(t)$, is perturbed linearly from a (stochastic) baseline $x_i(t)$ by filtered input ($*$ stands for convolution) from the rest of the network. The response of unit $i$ is captured by its linear response function $A_i(t)$, and $\bfW_{ij}$ is the connection strength of the input from unit $j$ to unit $i$. An examples of such a stochastic system includes the multivariate \OU{} (OU) process,  widely used to model biological networks~\cite{Pedraza:05,Tomioka:2004bs,Lestas:hy,Warren:2005vh}. We illustrate many of our ideas using this OU process. Details about how the OU process can be put into the form of Eq.~\eqref{e:lin} are in Appendix~\ref{S:OU_relation}, and  details about our numerical results in Appendix~\ref{S:num_details}.  We include a notation summary in Table~\ref{T:notation} in the Appendix.
 
For simplicity, we assume that  connection weights are uniform, so that $\bfW=w\bfW^0$ for an adjacency matrix $\bfW^0$. 
We also assume that the nodes are homogeneous in their dynamics and response to inputs, 
so that $A_i(t)=A(t)$, and $x_i(t)$ are i.i.d. processes.
These assumptions can be relaxed as explained in~\cite{Hu:2012vh}. 

\subsection{Cross-correlation and network motifs}

Our goal is to relate network architecture, described by the matrix $\bfW$, to coherence 
in network dynamics.  At second order, coherence is measured by the cross-covariance between the activities of nodes
$i$ and $j$ as a function of time lag $\tau$, 
$(\bfC_y(\tau))_{ij}$~\cite{corr_def}. As  computations are simpler in the spectral domain, we first consider the
cross-spectra,  
$\bfS_y(\om)=\EVs{\, \overline{{\bfyt}} \bfyt^T }$~\cite{ft_sp_def}, of the processes ($\tilde{x}(\om)=\mathcal{F}(x(t))$ represents the Fourier transform, $\bar{z}$ is a complex conjugate, $^T$ denotes a transpose, and bold symbols represent column vectors or matrices).
Cross-spectra and cross-covariances are related by the \WK{} Theorem, $\bfS_y(\om)=\mathcal{F}(\bfC_y(\tau))$~\cite{Laing:2009tw}.

After a Fourier transformation,  
the matrix form of Eq.~\eqref{e:lin} is 
\beq
\label{e:lin_freq}
\bfyt(\om)=\bfxt(\om)+\At(\om)\bfW \bfyt(\om).
\eeq
If the spectral radius $\Psi(\At(\om)\bfW)<1$, 
 then Eq.~\eqref{e:lin_freq} implies $\bfyt=(\bfI -\At\bfW)^{-1}\bfxt$, where $\bfI$ is the identity matrix. This leads to the following relation between
the matrix of cross-spectra, and  auto-spectra of the isolated (baseline) nodes,
\beq
\label{e:cov}
\bfS_y(\om)=(\bfI -\bar{\At}\bfW)^{-1}\bfS_x(\om)( \bfI-\At\bfW^T)^{-1}.
\eeq
This shows how the baseline variability within individual nodes, $\bfS_x(\om)=S_x (\om) \bfI$, propagates through the network.
An analog of Eq.~\eqref{e:cov} holds for networks of integrate--and--fire neurons and Hawkes processes~\cite{Hawkes:1971-2,Trousdale:2012}. 
%Our findings about pairwise covariances extend to these systems.

Eq.~\eqref{e:cov} can be expanded in a series~\cite{Trousdale:2012,Hu:2012vh,Pernice:2011},
\beq
\label{e:cov_epn}
\bfS_y(\om)/S_x(\om)=\sum_{n,m=0}^\infty \bar{\At}^{n}\At^{m}\bfW^n(\bfW^T)^m.
\eeq
The cross-spectra are normalized by $S_x(\omega)$ to obtain a  unitless 
measure of network coherence, which we can use to
approximate average correlation coefficient (see~\cite{Hu:2012vh}).

As shown by~\cite{Pernice:2011,Trousdale:2012}, the sum in Eq.~\eqref{e:cov_epn} represents contributions to the cross spectrum from paths (i.e., motifs) within the network.  Several  second order motifs are shown in Fig.~\ref{F:0}.  For instance, the second order term  $\bar{\At}\At(\bfW \bfW^T)_{ij}=w^2 |\At|^2\sum_{k}\bfW^0_{ik}\bfW^0_{jk}$ counts all contributions to the cross-spectrum of nodes $i$ and $j$ due to common input from nodes $k$ (the rightmost motif in Fig.~\ref{F:0}(b)). 
In general $(\bfW^n(\bfW^T)^m)_{ij}$ represents the contribution of  \emph{$(n,m)$ motifs} which consist of two directed chains of length $n$ and $m$ emanating from a single apex and terminating in nodes $i$ and $j$, respectively.   See Fig.~\ref{F:illustration}. The same node can be visited multiple times, and the $(0,m)$ motif is a  chain of length $m$.

\begin{figure}[H]
\centering
\includegraphics[scale=0.2]{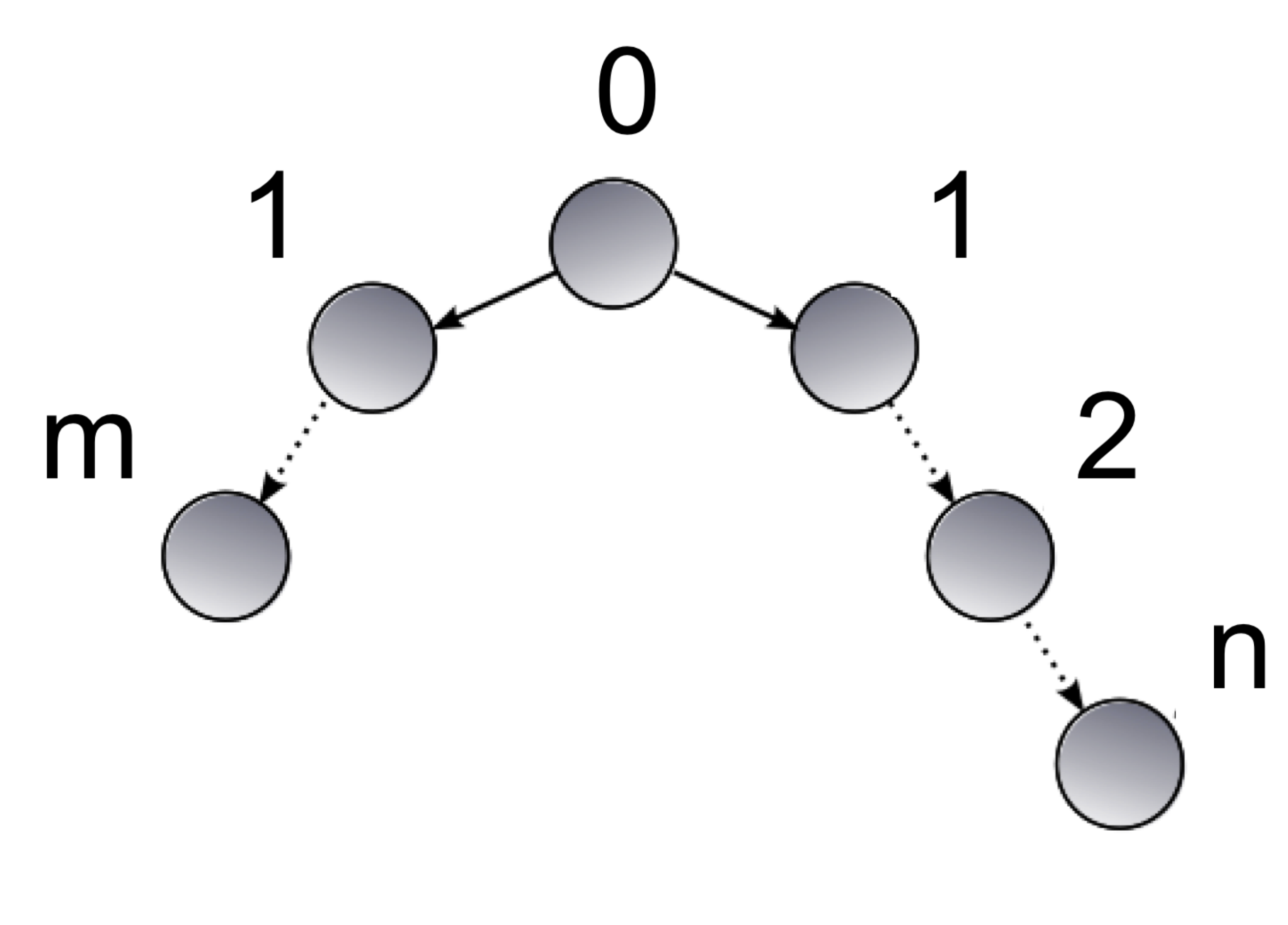}
\renewcommand{\figurename}{FIG.}
\caption{Illustration of a $(n,m)$ motif.}
\label{F:illustration}
\end{figure}

Fig.~\ref{F:1} illustrates such an expansion for two mutually inhibiting nodes (see also~\cite{Trousdale:2012}). The cross-covariance between the nodes is shown in Fig.~\ref{F:1}(a) with
contributions of low order motifs in Fig.~\ref{F:1}(b). As motif order increases, corresponding contributions to the cross-covariance decrease in magnitude, but increase in width. The asymmetry of a  contribution  increases with the asymmetry of the associated motif,
\emph{i.e.} the difference between  $n$ and $m$ in an $(n,m)$ motif: Compare the contributions of the $(1,2)$ and $(0,3)$ motifs. A graphical decomposition of the circuit into the first few $(n,m)$ motifs is shown in the inset of Fig.~\ref{F:1}(b).
Since the network is recurrent, the expansion in Eq.~\eqref{e:cov_epn} does not terminate as a node can appear multiple times in  a motif.

\begin{figure}[H]
\centering
\includegraphics[scale=1.4]{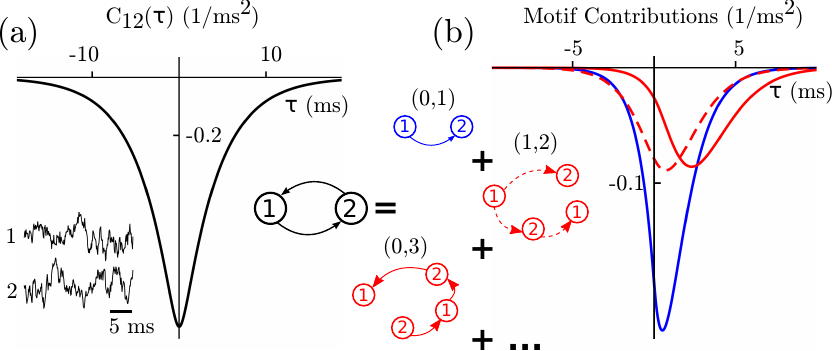}
\renewcommand{\figurename}{FIG.}
\caption{ (a) The cross-correlation function of two mutually inhibiting nodes modeled by an \OU{} process; inset shows traces of $y_1(t)$ and $y_2(t)$.  (b) Contribution of first- and third-order motifs to the cross-correlation function in  (a). (Middle Inset) Diagrammatic expansion of the network showing motifs whose contributions are given, via the same line and color types, in  (b).}
\label{F:1}
\end{figure}

\section{Moments, cumulants, and network-wide coherence}
\label{S:motif_cumulants}

\begin{figure}
\centering
{\label{f:motif_diagram}
\includegraphics[scale=1.4]{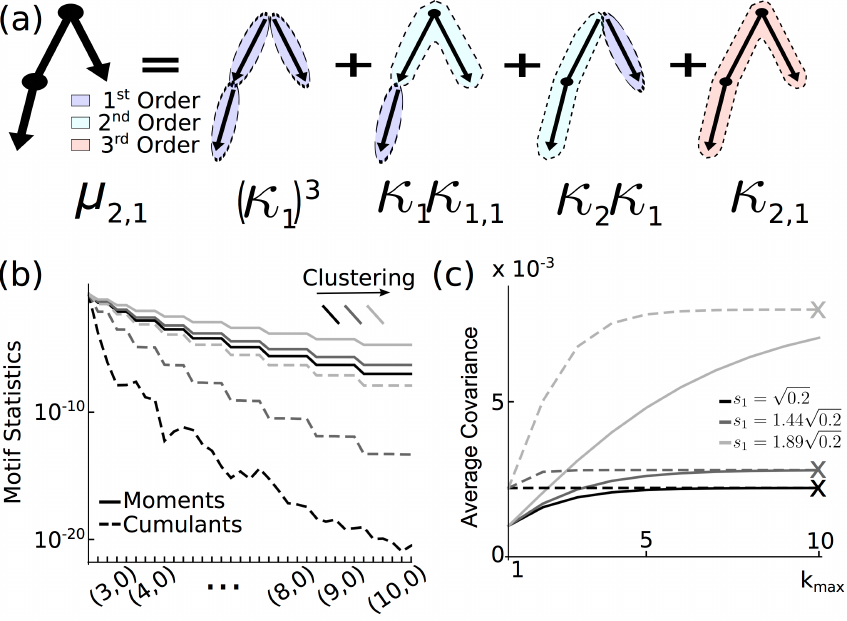}}\\
% switched to .pdf to make compiling faster, change back to .png if needed
% JT: .pdf compiles slightly faster, but gives me a big freeze as I scroll through, so I switched back to .png
\renewcommand{\figurename}{FIG.}
\caption{(a) The probability of observing the motif ($\M_{2,1}$) decomposed in terms of
\emph{motif cumulants} of the graph. (b) The magnitude of motif cumulants (dashed lines) and moments (solid lines) for stochastic block networks with $N=1000$, $p=0.2$. Clustering in network connectivity increases from darker to lighter lines (see text for definition of clustering and network structure; precise values as specified in panel (c)).  The $(n,m)$ motifs with $n\ge m$ are listed first by order $n+m$ then arranged by increasing $n$ within the motifs of same order. (c) Approximations of average covariances using motif moments (Eq.~\eqref{e:cov_epn_avg}) and cumulants (Eq.~\eqref{e:rsm}) truncated at order $k_{\max}$. Crosses indicate exact values obtained from Eq.~\eqref{e:cov}.}
\label{f:figure2}
\end{figure}

We next relate  network coherence and network structure using motif statistics.
For concreteness -- but without loss of generality \cite{Pernice:2011,Trousdale:2012} -- we consider the total covariance between pairs of nodes. This is equivalent to evaluating
all spectral quantities  at $\om = 0$, and we indicate this by suppressing dependences on $\om$.  
We measure network-wide coherence using the average of this total covariance over all pairs of nodes. 
As in~\cite{Pernice:2011,Hu:2012vh,Trousdale:2012}, if we denote by $\langle {\bfX} \rangle$ the empirical average of the entries of matrix $\bfX$, we obtain from Eq.~\eqref{e:cov_epn}
\beq
\label{e:cov_epn_avg}
\begin{split}
\langle \bfS_y \rangle /S_x
&= \sum_{n,m=0}^\infty \At^{n+m} \langle \bfW^n(\bfW^T)^m \rangle  
=: \frac{1}{N}\sum_{n,m=0}^\infty  g^{n+m}\M_{n,m}, 
\qquad \text{where} \qquad g=N\At w.
\end{split}
\eeq
Here the \emph{motif moment}, $\M_{n,m} = \langle \bfW^{0n}(\bfW^{0T})^m\rangle/N^{n+m-1}$, is the empirical probability of observing an $(n,m)$ motif in the network~\cite{Pernice:2011,Hu:2012vh}.  Note that the empirical average 
is defined over a particular realization of the adjacency matrix $\bfW^0$.   
We define  $\M_{n,0} =\M_{n},$ and  let $\M_{0,0}=1$. The entire hierarchy of motif moments, $\M_{n,m}$, needs to be known to  evaluate Eq.~\eqref{e:cov_epn_avg} exactly. In practice, only a subset of $\M_{n,m}$, up to a certain order $n+m\leq k_{\max}$, is known and can be used with Eq.~\eqref{e:cov_epn_avg} to approximate network-wide covariance.

Truncating Eq.~\eqref{e:cov_epn_avg} at some order yields an approximation of average 
coherence in terms of 
motif moments up to that order. However, these approximations
can exhibit significant deviations from the true value~\cite{Hu:2012vh}.  
Previously, we introduced an alternative, `motif resumming approximation'~\cite{Hu:2012vh}, which provided a series expansion of average coherence in terms of motif \emph{cumulants} (defined below) rather than
motif moments.  
Truncation of the resulting series yielded a significantly improved approximation of average coherence, given the same set of motif frequency data.

While we earlier provided a probabilistic interpretation of this motif cumulant approach, a general framework was missing~\cite{Hu:2012vh}.  We next provide such a framework, by reexamining the motif cumulants $\K_{n,m}$ that first appeared in~\cite{Hu:2012vh}. We provide a novel definition which clarifies the underlying combinatorial relationship between motif cumulants $\K_{n,m}$ and motif moments $\M_{n,m}$,  analogous to that between cumulants and moments of a random variable.  Equipped with this new definition, we are able to express dynamical correlations of all orders in terms of motif cumulants (Sec.~\ref{S:hoc}).

The construction of motif moments from cumulants is based on a familiar interpretation: estimating  the probability of a joint event from the probability of its constituents.    Fig.~\ref{f:figure2}(a) demonstrates this for an example motif.  Each term in the decomposition of this (2,1)  diverging motif arises from a cumulant of smaller or equal order.  The first term corresponds to the probability of the motif occurring in a network with edges chosen independently, \emph{i.e.}
an \ER{} network. Subsequent terms give corrections from excess occurrences of second and third order submotifs.
Thus, each motif cumulant, $\K_{n,m}$, captures ``pure" higher order connectivity statistics. Such decomposition can also be expressed in combinatorial form. Let $\mathcal{C}(n)$
be the set of all compositions (\emph{ordered} partitions) of $n$.
Then
\begin{eqnarray}
\label{E:recursive1}
\M_{n} &=&
\sum_{\{n_1,  \cdots , n_t\} \in \mathcal{C}(n)} \left(\prod_{i=1}^{t} \K_{n_i}\right) \\
\label{E:recursive2}
\M_{n,m} &=&
\sum_{\substack{\{n_1,  \cdots , n_t\}  \in \mathcal{C}(n) \\ \{m_1,  \cdots , m_s\} \in \mathcal{C}(m)}}  \left(\prod_{i=2}^{t} \K_{n_i}\right) \left(\K_{n_1,m_1} +\K_{n_1}\K_{m_1}\right) \left( \prod_{j=2}^{s} \K_{m_j} \right)
\end{eqnarray}
In evaluating these terms, we set $\left(\prod_{i=2}^{t} \K_{n_i}\right)=1$ if $t=1$.  

Expressions~\eqref{E:recursive1}-\eqref{E:recursive2} define the full set of $\K_{n,m}$ recursively.  These are related directly to coherent network dynamics in the theorem that follows.

\begin{theorem}
\label{TH:cumulant}
For a network with dynamics defined by Eq.~\eqref{e:lin}, with $\bfW=w\bfW^0$, the mean coherence (i.e., averaged across all cell pairs) is given by
\beq
\label{e:rsm}
\frac{\langle\bfS_y \rangle}{S_x}
=\frac{1}{N} \left(1 - \sum_{n=1}^\infty g^n \K_n  \right)^{-2} \left(1 +  \sum_{n,m=1}^	 \infty  g^{n+m}  \K_{n,m}\right) \;,
\eeq
where $g=N\At w$.
The  $\K_{n,m}$ here  are the motif cumulants, defined by ~\eqref{E:recursive1}-\eqref{E:recursive2}.
\end{theorem}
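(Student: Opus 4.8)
The plan is to reduce Theorem~\ref{TH:cumulant} to an identity between formal power series, where the moment--cumulant relations \eqref{E:recursive1}--\eqref{E:recursive2} become transparent. By Eq.~\eqref{e:cov_epn_avg} we have $\langle\bfS_y\rangle/S_x=\tfrac1N\sum_{n,m\ge0}g^{n+m}\M_{n,m}$, so it suffices to prove
\[
\sum_{n,m=0}^{\infty} g^{n+m}\M_{n,m} \;=\; \Bigl(1-\sum_{n=1}^{\infty}g^{n}\K_{n}\Bigr)^{-2}\Bigl(1+\sum_{n,m=1}^{\infty}g^{n+m}\K_{n,m}\Bigr).
\]
First I would introduce the generating functions $M(x)=\sum_{n\ge0}\M_{n}x^{n}$ (with $\M_{0}=1$), $K(x)=\sum_{n\ge1}\K_{n}x^{n}$, and --- tracking the two chains of an $(n,m)$ motif by separate variables --- $P(x,y)=\sum_{n,m\ge0}\M_{n,m}x^{n}y^{m}$ and $Q(x,y)=\sum_{n,m\ge1}\K_{n,m}x^{n}y^{m}$. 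It then suffices to establish the bivariate identity $P(x,y)=\bigl(1+Q(x,y)\bigr)\big/\bigl((1-K(x))(1-K(y))\bigr)$ and to set $x=y=g$.

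Next I would read \eqref{E:recursive1} as the classical ``sequence of parts'' enumeration of compositions: $\M_{n}$ is the coefficient of $x^{n}$ in $\sum_{t\ge0}K(x)^{t}$, hence $M(x)=(1-K(x))^{-1}$. For \eqref{E:recursive2}, which is valid for $n,m\ge1$ (the remaining entries being fixed by the conventions $\M_{n,0}=\M_{n}$, $\M_{0,m}=\M_{m}$, $\M_{0,0}=1$), I would separate the distinguished leading parts $n_{1},m_{1}$ --- which together contribute $\K_{n_{1},m_{1}}+\K_{n_{1}}\K_{m_{1}}$ --- from the two independent tails $(n_{2},\dots,n_{t})$ and $(m_{2},\dots,m_{s})$, contributing $\prod_{i\ge2}\K_{n_{i}}$ and $\prod_{j\ge2}\K_{m_{j}}$ respectively (empty tails giving $1$). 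On the generating-function side the leading block sums to $Q(x,y)+K(x)K(y)$ and each tail sums to $(1-K(x))^{-1}=M(x)$, resp.\ $M(y)$, so $\sum_{n,m\ge1}\M_{n,m}x^{n}y^{m}=M(x)M(y)\bigl(Q(x,y)+K(x)K(y)\bigr)$; adding the boundary rows and columns yields $P(x,y)=M(x)+M(y)-1+M(x)M(y)\bigl(Q(x,y)+K(x)K(y)\bigr)$.

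Substituting $M(x)=(1-K(x))^{-1}$, $M(y)=(1-K(y))^{-1}$ and clearing the denominator $(1-K(x))(1-K(y))$, the numerator collapses, since $(1-K(y))+(1-K(x))-(1-K(x))(1-K(y))+K(x)K(y)=1$, leaving $P(x,y)=\bigl(1+Q(x,y)\bigr)\big/\bigl((1-K(x))(1-K(y))\bigr)$. Setting $x=y=g$ and multiplying by $1/N$ gives Eq.~\eqref{e:rsm}. All manipulations are legitimate as identities of formal power series (each coefficient receives only finitely many contributions); to pass to the analytic statement at $g$ one uses that whenever the expansion \eqref{e:cov_epn} converges, equivalently $\Psi(\At\bfW)<1$, the series $\sum_{n,m}g^{n+m}\M_{n,m}$ converges absolutely, which in particular gives $|K(g)|<1$ and thereby justifies the rearrangements and the geometric sum $\sum_{t\ge0}K(g)^{t}$.

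I expect the main obstacle to be the bookkeeping in the second step: correctly matching the asymmetric structure of \eqref{E:recursive2} --- the distinguished leading parts versus the tails, the $t=1$ and $s=1$ empty-product conventions, and the degenerate $n=0$ and $m=0$ boundary terms handled by the separate definitions $\M_{n,0}=\M_n$, $\M_{0,0}=1$ --- to the factored generating function, with no term double counted or omitted. Once that dictionary is pinned down, the final algebraic cancellation is a one-line computation and the convergence remarks are routine.
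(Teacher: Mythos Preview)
Your argument is correct and gives a clean, self-contained proof of Theorem~\ref{TH:cumulant}. It is, however, genuinely different from the route taken in the paper.

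The paper does not work with generating functions at all. Instead, it proves (Appendix~\ref{S:exp_motif}) that the recursively defined cumulants of Eqs.~(\ref{E:recursive1})--(\ref{E:recursive2}) coincide with explicit matrix expressions of the form $\K_n=N^{-n}\bfu^T(\bfW^0\bfTheta)^{n-1}\bfW^0\bfu$ and $\K_{n,m}=N^{-(n+m)}\bfu^T\bfW^\theta_n\bfTheta(\bfW^\theta_m)^T\bfu$, where $\bfTheta=\bfI-\bfu\bfu^T$. This is done by writing $\M_{n,m}$ via Eq.~\eqref{E:mn_explicit}, inserting $\bfI=\bfTheta+\bfH$ between every pair of factors $\bfW^0$, and expanding; each placement of an $\bfH=\bfu\bfu^T$ splits the matrix product and generates exactly the composition structure of \eqref{E:recursive2}. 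Once the matrix form is established, Eq.~\eqref{e:rsm} is read off from an earlier result (Eq.~(32) of~\cite{Hu:2012vh}), which the present paper does not re-derive.

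Your approach has two advantages: it is elementary (no reliance on the prior paper) and it makes transparent that Eq.~\eqref{e:rsm} is a purely combinatorial consequence of the moment--cumulant relations, independent of any matrix realization. Interestingly, your method is essentially the $k=2$ specialization of the combinatorial proof the paper itself gives for the higher-order Theorem~\ref{TH:resumming_k} (Appendix~\ref{S:proof_hoc_rsm}); the paper simply did not present the second-order case that way. Conversely, the paper's matrix approach buys something you do not get: the explicit formulas \eqref{E:explicit1}--\eqref{E:explicit2} are what drive the spectral-radius heuristics of Section~\ref{S:heter_sub}.\ref{S:partition}, where the decay rate of $\K_{n,m}$ is tied to $\Psi(\bfW^0\bfTheta)$.

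One small caveat: your final sentence asserts that absolute convergence of $\sum g^{n+m}\M_{n,m}$ ``in particular gives $|K(g)|<1$.'' This is not quite automatic, since the $\K_n$ need not be nonnegative; the paper sidesteps the issue by simply stating the hypothesis $|\sum_n\K_n|<1$ explicitly in the analogous Theorem~\ref{TH:resumming_k}. You could do the same, or note that the formal identity $P(x,y)(1-K(x))(1-K(y))=1+Q(x,y)$ already suffices once all four series are known to converge at $x=y=g$, without ever invoking the geometric-series rearrangement.
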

To prove this result we demonstrate a relation between the cumulants, $\K_{n,m},$ and the quantities expressed in terms of matrix products in Eq.~(32) of~\cite{Hu:2012vh}.  Eqn.~\eqref{e:rsm} then follows immediately from substituting the $\K_{n,m}$ into Eq.~(32) of~\cite{Hu:2012vh}.  The proof is given in Appendix~\ref{S:exp_motif}.

\bigskip

In Fig.~\ref{f:figure2}(c) we compare the expressions for network coherence in terms of motif moments (Eq.~\eqref{e:cov_epn_avg}) and motif cumulants (Eq.~\eqref{e:rsm}).  We compute both expansions for three example networks (whose construction and differences will be the topic of later sections); for each, we  illustrate how motifs of increasing order contribute to predicted network coherence. 

This illustrates a general phenomenon. Truncating Eq.~\eqref{e:cov_epn_avg}, and keeping only terms with  $n+m \leq k_{\max}$, approximates the contributions of these $(n,m)$ motifs to the mean dynamical coherence in the network.
 A similar truncation of Eq.~\eqref{e:rsm} however approximates coherence
in terms of contributions of paths of {\it all} orders. In this latter case,  frequencies of motifs of order exceeding $k_{max}$ are \emph{predicted} from the observed frequencies of motifs of
order up to $k_{max}$. 
Fig.~\ref{f:figure2}(c) shows that these predictions are useful: values of correlations based on cumulants converge more quickly than those derived from motif moments. The difference can be explained by looking at the magnitude of the cumulants/moments against the order (Fig.~\ref{f:figure2}(b)). Importantly, cumulants decay much faster than moments in all three cases --- hence the increased accuracy of Eq.~\eqref{e:rsm} over Eq.~\eqref{e:cov_epn_avg} at a given order.

Fig.~\ref{f:figure2}(b) also illustrates that heterogeneity in network architecture can impact how quickly cumulants and 
moments decay, an observation we will revisit. The networks used in Fig.~\ref{f:figure2}(b) and (c) have a variable degree of \emph{clustering} or  ``clumping" in network connectivity ---  we precisely define our graph generation rules below. A greater degree of
clustering results in a slower decay of both motif moments and cumulants. Higher order statistics are necessary to accurately describe the structure of such networks.  Hence, with more heterogeneity in connections across a network, the frequency of larger, more complex graph motifs has a greater impact on network coherence.

\section{Heterogeneous networks and subpopulation cumulants} 
\label{S:heter_sub}

Motif cumulants --- via Eq.~\eqref{e:rsm} --- provide a way to estimate global dynamical correlation in terms of local network structure.  As illustrated above, the accuracy of such approximations depends on the network's architecture (see Fig.~\ref{e:cov}).  We next highlight the key impact of heterogeneity or clustering in network connectivity  on the approximation.  We then introduce a partitioning approach, and the allied concept of subpopulation cumulants, which allow us to relate local network structure to dynamics even in  heterogeneous networks.

\subsection{Heterogeneity in network architecture}
\label{S:heter_network}

To study the impact of heterogeneity on the approximation given by the motif cumulant method, we first consider the \emph{stochastic block network model}~\cite{Wang:1987tc,Daudin:2008,LitwinKumar:2012ue} illustrated in Fig.~\ref{F:0}. Such networks are comprised of two subpopulations (or clusters) of size $N/2$ (indicated by circular and triangular nodes). Each cluster is associated with a constant $s_i$, $i=1,2$, and the connection probability between nodes in subpopulation $i$ and $j$ is $p_{ij}=s_is_j$.
With fixed overall connection probability $p$, the difference between $s_1$ and $s_2$  describes the degree of clustering in the network.  The case $s_1 = s_2$ corresponds to an \ER{} network (no clustering), while $s_1=2 \sqrt p$, $s_2=0$ implies that only nodes in the first subpopulation are connected (extremal clustering).

To illustrate the impact of clustering we generate three networks with different values of $s_1,s_2$ in Fig.~\ref{f:figure2}(b,c).  Comparing pairs of curves (moments and cumulants) with different shades (i.e., different degrees of clustering) reveals the dependence of motif moments and cumulants on graph structure. The magnitude of motif moments and cumulants of a given order increases with clustering (Fig.~\ref{f:figure2}(b)). Hence in clustered, heterogeneous networks large motifs can strongly impact dynamical coherence (Fig.~\ref{f:figure2}(c)). Moreover, network motifs of increasing order are needed to accurately predict dynamical correlations as clustering increases.

As a more complex example, we also considered the \BA{} model. We find that the  behavior of the two models is
similar (Fig.~\ref{f:figure2_SF} in the Appendix). Such similarity is consistent with observations reported in the literature~\cite{Prettejohn:2011tu} and underscores the generality of the impact of network heterogeneity.

%Networks generated using the \BA{}  model behave similarly  \cite{Prettejohn:2011tu}, a point we return to below (See also Fig.~\ref{f:figure2_SF} in the Appendix).

These results agree with intuition. \ER{} networks have an architecture that is ``statistically homogeneous," as the probability of each link occurring in the network is the same.  Thus, the most local network statistic -- connection probability -- fully determines graph structure and hence the level of dynamical coherence.
Similarly, `nearly \ER{}' networks are without significant graphical heterogeneity, and low order motif cumulants can accurately predict dynamical coherence.    On the other hand, in highly clustered networks the probability of a path between a set of nodes depends on higher order connectivity statistics.  As a result, the frequency of large motifs cannot be obtained accurately from the frequencies of smaller ones.  In such networks higher-order motif statistics have a significant impact on dynamical coherence.

The necessity of estimating the frequency of higher order motifs could limit the applicability of this approach.  In many situations the full connectivity structure of a network is not known, and global properties of the network are difficult to estimate. For instance, in the case of biological neuronal networks, the number of neurons which can be simultaneously recorded
in order to map out their connectivity is often limited to only a small handful~\cite{Song:2005, Perin:2011}.  Moreover, many networks possess additional structure past the simple heterogeneities discussed above  -- for instance, neuronal networks may be composed of both excitatory and inhibitory cells. Accounting for such natural
subdivisions of the graph can lead to more accurate approximations of dynamical coherence.

\subsection{Subpopulation cumulants}

We next show how to subdivide a network  to tame the effects of heterogeneity in  architecture, and re-establish the link between local connectivity and global coherence. Subsets of nodes in graphs can be grouped into classes, or subpopulations, that share features of dynamics or
connectivity.  Once a division is given, we can characterize each subpopulation by its own motif statistics. These subpopulation motifs are first introduced in~\cite{Hu:2012vh} in the context of studying neural networks with two different types of cells. However, a key difference here is that division or grouping of nodes may not be given in advance, but can be obtained (as we will show) from the network architecture. How the nodes are subdivided can affect the accuracy of the motif cumulant method, a matter we will address in the next section. First, we extend the ideas in~\cite{Hu:2012vh} to the general case of $b$ populations using the new combinatorial definition of motif cumulants introduced in Sec.~\ref{S:motif_cumulants}.

For $b$ subpopulations,   $\M_{n,m}$ becomes a $b\times b$ matrix of motif moments.  Entry $p,q$ of this matrix is the empirical probability
of  an $(n,m)$ motif with end nodes belonging to populations $p$ and $q$, respectively.  Let $V$ be the set of all nodes, and $V_\alpha, \alpha=1,\cdots,b$ be the set of nodes in  population $\alpha$. We denote the size of each population by $N_\alpha=|V_i|$. We then have
\begin{eqnarray}
(\M_{n,m})_{p,q}
\nonumber
&=&\sum_{i_n,i_{n-1},\cdots, i_0=j_0, j_1,\cdots,j_{m-1}, j_{m}}
\bfW^0_{i_n,i_{n-1}} \bfW^0_{i_{n-1},i_{n-2}} \cdots \bfW^0_{i_{1},i_0}
\bfW^{0}_{j_1,i_0}\cdots \bfW^{0}_{j_{n},j_{n-1}} /Z \\
\nonumber
&=&
\sum_{i_n,i_{n-1},\cdots, i_0=j_0, j_1,\cdots,j_{m-1}, j_{m}}
\bfW^0_{i_n,i_{n-1}} \bfW^0_{i_{n-1},i_{n-2}} \cdots \bfW^0_{i_{1},i_0}
\bfW^{0T}_{i_0,j_1}\cdots \bfW^{0T}_{j_{n-1},j_n}/Z\\
&=&(\langle  (\bfW^0)^{n} (\bfW^{0,T})^m \rangle_{B})_{p,q}/N^{n+m-1}
\end{eqnarray}
In these sums we assumed that the indices satisfy $i_n \in V_p$, $j_m \in V_q$, and other $i_{s<n}$, $j_{t<m}$ are chosen from $V$.
We also used the normalization factor $Z=N^{n+m-1}N_p N_q$, while $\langle  \bfM \rangle_{B}$ represents the block average of a matrix according to the division of populations, i.e. $(\langle  \bfM \rangle_{B})_{p,q}=\frac{1}{N_p N_q}\sum_{i\in V_p,j \in V_q}M_{i,j}$.

This partition of nodes and motifs into subpopulations is depicted in Fig.~\ref{F:0}, where the color of a node indicates its class.
Motifs may involve either nodes of a single class, or a combination of the two. 

Motif cumulants $\K_{n,m}$ are $b\times b$ matrices that are defined by recursive relationships similar to Eqs.~(\ref{E:recursive1},\ref{E:recursive2}):
\begin{eqnarray}
\label{E:recursive1_sub}
\M_{n} &=&
 \sum_{\substack{\{ n_1,\ldots,n_t\} \in \mathcal{C}(n)} }\left[ \left(\prod_{i=2}^{t} \K_{n_i} \bfE \right) \K_{n_1} \right] \\
\label{E:recursive2_sub}
\M_{n,m}&=&
 \sum_{\substack{\{ n_1,\ldots,n_t\} \in \mathcal{C}(n) \\ \{ m_1,\ldots,m_s\} \in \mathcal{C}(m)}} \left(\prod_{i=2}^{t} \K_{n_i} \bfE \right) \left(\K_{n_1,m_1} +\K_{n_1} \bfE \K_{m_1}^T\right) \left( \prod_{j=2}^{s} \bfE \K_{m_j}^T \right).
\end{eqnarray}
Here $\bfE=\mathrm{diag}\{N_1/N,\cdots,N_b/N \}$ is inserted between each motif cumulant matrix multiplication and yields the appropriate weighted sums for the interpretation of the terms $\M_{n,m}$ and $\K_{n,m}$ as \emph{probabilities}. Specifically, scaling by $\bfE$ is multiplication by the probability of selecting nodes from respective populations at ``breaks" in the motifs.

How should these population-specific motif cumulants be combined to estimate the average correlation? An extension of Eq.~\eqref{e:rsm} was developed for two populations in~\cite{Hu:2012vh}, and stated in terms of matrix products.  This generalizes immediately to the case of an arbitrary number of populations, $b$, and -- as in Theorem~\ref{TH:cumulant} above -- can be restated in terms of (matrix-valued) motif cumulants.  The result is:

\begin{corollary}
\label{TH:cumulant_multi}
Let $\langle \bfS_{y} \rangle_B$ represent a block-wise average over entries corresponding to each subpopulation. For a network with dynamics defined by Eq.~\eqref{e:lin} with $\bfW=w\bfW^0$, the generalization of Eq.~(\ref{e:rsm}) to subpopulation motif cumulants is~\cite{Hu:2012vh}:
\beq\label{E:rsm_sub}
\begin{split}
\langle \bfS_{y} \rangle_B/ S_x&:=\bfD\bfU^T\bfS_{y}\bfU\bfD/S_x\\
&=\frac{1}{N} \left(\bfI - \sum_{n=1}^\infty g^n \K_n \bfE\right)^{-1}\left(\bfE^{-1}+  \sum_{n,m=1}^\infty  g^{n+m} \K_{n,m} \right)
\left(\bfI - \sum_{m=1}^\infty g^m \bfE \K_m^T \right)^{-1}.
\end{split}
\eeq
where $\bfD=\mathrm{diag}\{1/\sqrt{N_1},\cdots,1/\sqrt{N_b} \}$, $\bfU$ is $N\times k$ matrix given by $\bfU=[\bfu_1|\cdots|\bfu_k]$, and $\bfu_i=(0,\cdots,0,1,\cdots,1,0\cdots,0)^T/\sqrt{N_i}$ is the vector where the nonzero entries appear only at indices that match one of the nodes in the given subpopulation, normalized to unit $L_2$ norm.
The  $\K_{n,m}$ here  are the subpopulation motif cumulants, defined in Eqs.~(\ref{E:recursive1_sub}--\ref{E:recursive2_sub}).
\end{corollary}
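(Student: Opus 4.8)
The plan is to mirror the two-step argument behind Theorem~\ref{TH:cumulant}, but carrying the block structure along: first re-express the block-averaged coherence as a power series in the matrix-valued motif moments $\M_{n,m}$, and then resum that series using nothing but the recursive definitions \eqref{E:recursive1_sub}--\eqref{E:recursive2_sub} of the subpopulation cumulants $\K_{n,m}$. Since those recursions already express each $\M_{n,m}$ as a sum of products of the $\K$'s, the whole corollary reduces to one algebraic identity once the first step is in place.

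For the first step I would note that, for any $N\times N$ matrix $\bfM$, $(\bfD\bfU^T\bfM\bfU\bfD)_{p,q}=\frac{1}{N_pN_q}\sum_{i\in V_p,\,j\in V_q}M_{ij}=(\langle\bfM\rangle_B)_{p,q}$, so that the left-hand side of \eqref{E:rsm_sub} is literally $\langle\bfS_y\rangle_B/S_x$. Substituting the path expansion \eqref{e:cov_epn} (evaluated at $\om=0$, so $\bar\At=\At$, and with $\bfW=w\bfW^0$) and using the definition of $\M_{n,m}$ as the $N^{n+m-1}$-normalized block average of $(\bfW^0)^n(\bfW^{0T})^m$ gives the matrix analogue of \eqref{e:cov_epn_avg},
\beq
\langle\bfS_y\rangle_B/S_x=\frac{1}{N}\sum_{n,m=0}^\infty g^{n+m}\M_{n,m},\qquad g=N\At w,
\eeq
where one also records the boundary values $\M_{0,0}=\bfE^{-1}$ (a single node, contributing $\mathrm{diag}\{N/N_1,\cdots,N/N_b\}$ after the block normalization), $\M_{n,0}=\M_n$, and $\M_{0,m}=\M_m^T$.

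For the second step, write $\mathcal{K}:=\sum_{n\ge1}g^n\K_n$ and $\mathcal{K}_{12}:=\sum_{n,m\ge1}g^{n+m}\K_{n,m}$. Summing \eqref{E:recursive1_sub} over $n$ converts the sum over compositions of $n$ into a geometric series in which consecutive cumulant factors are separated by an $\bfE$; reading the matrix products in the stated order this collapses to $\sum_{n\ge1}g^n\M_n=\sum_{t\ge1}(\mathcal{K}\bfE)^{t-1}\mathcal{K}=(\bfI-\mathcal{K}\bfE)^{-1}\mathcal{K}$, and transposing gives $\sum_{m\ge1}g^m\M_{0,m}=\mathcal{K}^T(\bfI-\bfE\mathcal{K}^T)^{-1}$. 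Performing the same resummation on \eqref{E:recursive2_sub} --- now the composition of $n$ feeds the left geometric factor, the composition of $m$ the right one, and the apex part contributes $\K_{n_1,m_1}+\K_{n_1}\bfE\K_{m_1}^T$ --- yields
\beq
\sum_{n,m\ge1}g^{n+m}\M_{n,m}=(\bfI-\mathcal{K}\bfE)^{-1}\big(\mathcal{K}_{12}+\mathcal{K}\bfE\mathcal{K}^T\big)(\bfI-\bfE\mathcal{K}^T)^{-1}.
\eeq

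Finally, set $L:=(\bfI-\mathcal{K}\bfE)^{-1}$ and $R:=(\bfI-\bfE\mathcal{K}^T)^{-1}$ and add the four contributions $\M_{0,0}$, $\sum g^n\M_n$, $\sum g^m\M_{0,m}$, and $\sum_{n,m\ge1}g^{n+m}\M_{n,m}$. Using the resolvent identities $L=\bfI+L\mathcal{K}\bfE$ and $R=\bfI+\bfE\mathcal{K}^T R$ one checks $L\bfE^{-1}R=\bfE^{-1}+\mathcal{K}^T R+L\mathcal{K}+L\mathcal{K}\bfE\mathcal{K}^T R$; the right-hand side is exactly $\M_{0,0}+\sum g^m\M_{0,m}+\sum g^n\M_n$ plus the $\mathcal{K}\bfE\mathcal{K}^T$ term from the diverging sum, so the four contributions telescope to $L(\bfE^{-1}+\mathcal{K}_{12})R$, and dividing by $N$ is precisely \eqref{E:rsm_sub}. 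I expect the only genuine difficulty to be bookkeeping: placing every $\bfE$-insertion and transpose so that the composition sums collapse to exactly $(\bfI-\sum g^n\K_n\bfE)^{-1}$ and $(\bfI-\sum g^m\bfE\K_m^T)^{-1}$ rather than to conjugates of them, and checking that these resolvents exist --- i.e. that the relevant matrices have spectral radius below $1$ --- under the same hypothesis $\Psi(\At\bfW)<1$ that underlies Theorem~\ref{TH:cumulant} (the scalar statement \eqref{e:rsm} is the case $b=1$, $\bfE=1$). An essentially equivalent route, and the one hinted at just before the corollary, is to extend the $b=2$ matrix-product resummation of~\cite{Hu:2012vh} verbatim to arbitrary $b$ --- that derivation is insensitive to the number of blocks --- and then identify the $\K_{n,m}$ defined by \eqref{E:recursive1_sub}--\eqref{E:recursive2_sub} with the matrix quantities that appear there, exactly as in the proof of Theorem~\ref{TH:cumulant}.
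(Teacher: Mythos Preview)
Your proposal is correct, but it takes a different route from the paper's own argument in Appendix~\ref{S:sub_pop_rsm}. You carry out the resummation explicitly: you identify the boundary moments $\M_{0,0}=\bfE^{-1}$, $\M_{n,0}=\M_n$, $\M_{0,m}=\M_m^T$, sum the composition recursions \eqref{E:recursive1_sub}--\eqref{E:recursive2_sub} into geometric series with the $\bfE$-insertions tracked by hand, and then use resolvent identities to telescope the four pieces into $L(\bfE^{-1}+\mathcal{K}_{12})R$. This is a clean direct computation and nothing is missing.

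The paper instead works structurally. It introduces a weighted product $(\bfA\odot\bfB)_{ij}=\sum_k \bfA_{ik}\bfB_{kj}\,N_k/N$, observes that under $\odot$ the subpopulation recursions \eqref{E:recursive1_sub}--\eqref{E:recursive2_sub} become \emph{literally identical} to the scalar recursions \eqref{E:recursive1}--\eqref{E:recursive2}, and notes that the proof of Theorem~\ref{TH:cumulant} uses only associativity and distributivity of the product --- both of which $\odot$ enjoys. So the scalar proof transfers verbatim, and the $\bfE$'s reappear only when one unpacks $\odot$ back into ordinary matrix multiplication. Your alternative at the end (extend the $b=2$ derivation of~\cite{Hu:2012vh} to arbitrary $b$ and then identify the cumulants) is what the text surrounding the corollary advertises, but the appendix proof is this $\odot$ reduction rather than a rerun of the matrix calculation.

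What each buys: your approach is self-contained and makes every $\bfE$ and transpose visible, which is reassuring for the $k=2$ case. The paper's $\odot$ device hides that bookkeeping entirely and, more importantly, is what lets the same one-line argument establish the higher-order subpopulation formula \eqref{E:rsm_hoc_sub} without redoing any algebra --- an advantage your explicit telescoping would not share without substantially more work at each order.
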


\noindent The arguments necessary to establish this Corollary are given in Appendix~\ref{S:sub_pop_rsm}.

%As is evident,  
% and reproduced in Appendix~\ref{S:sub_pop_rsm} for completeness,  gives average covariances in terms of the matrices $\K_{n,m}$ (Eq.~\eqref{E:rsm_sub}). In fact, the two formulae (and the derivations) are very similar once scalar quantities (whole network motif cumulants) are replaced by matrices (consisting of motif cumulants within and across subpopulations).

In Fig.~\ref{f:figure3}(a) we use stochastic block model networks to demonstrate the subpopulation motif approach. The structure of such networks is defined using two groups of nodes with different connectivity.  We group nodes accordingly into two populations and apply the subpopulation cumulant formula given by Eq.~\eqref{E:rsm_sub}. The resulting approximation of average correlations is a significant improvement over that obtained using a single population: First order motif cumulants {\it alone} perfectly predict average correlations; whereas we require motifs of order up to 4 or 5 orders for the same networks if we use a single population approach (Fig.~\ref{f:figure2}(b)).

%Fig.~\ref{f:figure3}(a) demonstrates the subpopulation motif approach with the stochastic block model networks. Here we group nodes according to the natural subpopulations in the stochastic block network model. We see that first order motif cumulants {\it alone} perfectly predict average correlations (compare with Fig.~\ref{f:figure2}(b)).

Importantly, the subpopulation approach also works when there is no obvious way to group the nodes.  As an example, consider the highly heterogeneous \BA{} networks.   
If we order nodes by degree,  two subpopulations can be formed from nodes with degrees above and below a given threshold.  Fig.~\ref{f:figure3}(b) shows that this approach substantially simplifies the link between network structure and dynamics:  if the subpopulations are chosen optimally, covariance in the network dynamics can be accurately predicted using motifs of only order two, while motifs up to order four or five are needed otherwise.

\begin{figure}[H]
\centering

{\label{f:sub_HE}
\includegraphics[scale=1.4]{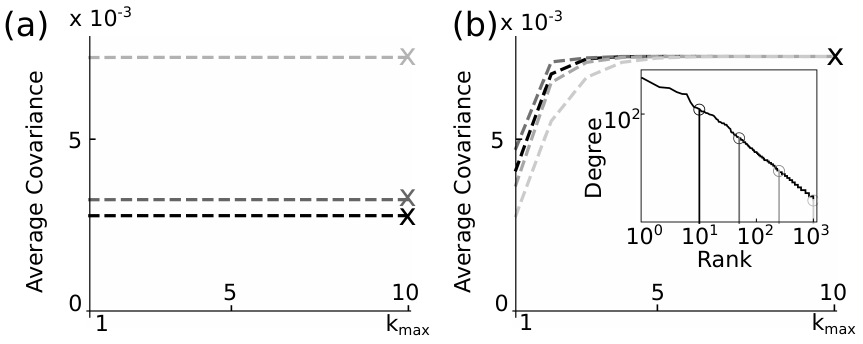}}
\renewcommand{\figurename}{FIG.}
\caption{Approximations of average covariances using the subpopulation cumulant approach, truncating at order $k_{\max}$.  Crosses indicate exact values obtained from Eq.~\eqref{e:cov}. (a) Stochastic block model networks of Fig.~\ref{f:figure2} (same color scheme) divided into two subpopulations 
--- first order motif cumulants now provide a complete description of
the network structure; (b) \BA{} network divided into two subpopulations according to whether the sum of in- and out-degrees of each node
lie above or below different thresholds (inset shows cut-off degree ranks (descending)).}
\label{f:figure3}
\end{figure}

\subsection{How to partition a network and why it works} 
\label{S:partition}

In \cite{Hu:2012vh}, we provided an intuitive explanation of why motif cumulants provide a better approximation of network-wide covariance (Eq.~\eqref{e:rsm}) than  motif moments  (Eq.~\eqref{e:cov_epn_avg}). In this section we extend this argument to heterogeneous architectures. In doing so, we will reveal why network partitioning can work so well, describe a rule of thumb and apply it to a general network.

First, we review the arguments in~\cite{Hu:2012vh} for statistically homogeneous (e.g. \ER{}) networks.
The argument was based on studying the spectral radii $\Psi(\bfW^0)$ and $\Psi(\bfW^0 \bfTheta)$,  where $\bfTheta=\bfI-\bfu \bfu^T$ and $\bfu=(1,\cdots,1)^T/\sqrt{N}$. Using the matrix expression of motif statistics (see Appendix Eq.~(\ref{E:explicit1}-\ref{E:mn_explicit})), it is straightforward to see that those spectral radii are related to the asymptotic rate of decay of the moments, $\M_{n,m}$,~\cite{Pernice:2011} and cumulants, $\K_{n,m},$ respectively. 

The faster decay of  cumulants compared to moments is therefore reflected by $\Psi(\bfW^0\bfTheta)$ being much smaller than $\Psi(\bfW^0)$. This is indeed the case for networks with sufficiently ``homogeneous" connectivity~\cite{Hu:2012vh}, cf.~\cite{Pernice:2011}: For \ER{} networks, the spectrum of $\bfW^0$ is characterized by a bulk part with many eigenvalues distributed over a region near 0 in the complex plane, and one single positive eigenvalue with much larger magnitude. This latter eigenvalue determines $\Psi(\bfW^0)$  (from the Perron-Frobenius theorem~\cite{Horn:1990}, cf.~\cite{Rajan:2006}), and therefore the rate of decay of the moments $\M_{n,m}$. 
To study $\Psi(\bfW^0\bfTheta)$, and therefore the rate of decay of the motif cumulants, we first define the ``PF vector" as the eigenvector associated with the outlying eigenvalue of $\bfW^0$ in an arbitrary network. For sufficiently ``homogeneous" networks such as \ER{} networks, the PF vector is close to $\bfu$ as a reflection of the underlying homogeneity. 
%\Kcomment{This could be a bit confusing.  I think the PF vector is defined for all networks.  But here we say ``As a reflection of the underlying homogeneity,''.  What if the network is heterogeneous? In particular, we should make it clear that we first consider nearly homogeneous networks, and then heterogeneous networks.  Perhaps split this into two paragraphs.} \Ycomment{This paragraph should be all about homogeneous networks. There is a gap of narration caused by defining the PF vector (for all networks). I also rewrite the last sentence above, hopefully clear now. }
Note that multiplication by $\bfTheta$ essentially removes the eigenvalue associated to this vector from the spectrum of $\bfW^0\bfTheta$, since  $\bfW^0\bfTheta \bfu=\bfW^0 0=0$. This leads to the significant reduction of $\Psi(\bfW^0\bfTheta)$ compared to $\Psi(\bfW^0)$.

To extend such intuition to heterogeneous networks, we need to answer two questions:  First, what is the PF vector for heterogeneous networks? Second, how does dividing a network into subpopulations change the counterpart of $\bfTheta$, and the resulting spectrum?

We first observe that for many networks, the PF vector is approximately the (in) degree list, denoted by $\bfd$ (normalized to unit  $L_2$-norm). In particular, we have found numerically that this is the case for stochastic block models and the \BA{} networks we consider (see Fig.~\ref{F:PF_vec} in Appendix). We will use this observation about the PF vector in making intuitive arguments below, but first pause to make some general, heuristic comments as to its possible justification. We begin by referring back to the case of \ER{} networks, where the PF vector approximately proportional to the homogeneous vector $\bf u$ as stated above; and for large matrices, $\bf u$ will also be approximately proportional to the degree vector $\bf d$ with small (relative) error. Now looking at the ensemble average $\EVb{\bfW^0}\propto \bfu \bfu^T$ and observe that $\bfu$ is the (exact) PF vector for this average matrix $\bfu \bfu^T$. Thus the PF vector for the ensemble average and for realizations of the adjacency matrices agree --- although this relies on the probabilistic structure of the underlying random matrices in a much more complicated way than we attempt to describe. Next, for a more general graph model, consider an adjacency matrix with an ensemble average that can be written in rank-one form: $\EVb{\bfW^0}=\bfa \bfb^T$ (where $\bfa,\bfb$ are column vectors with nonnegative entries). The PF vector for $\EVb{\bfW^0}$ is $\bfa$; moreover, this is once again proportional to the (average) in-degree list. An analogy with the \ER{} case suggests a possible reason for why the PF vector for individual adjacency matrices $\bfW^0$ are also found to be approximately proportional to $\bf d$ --- although this argument is not rigorous.

We now discuss how to use the fact that the PF vector $\propto \bf d$ to best partition a network into subpopulations.  Recall that the subpopulation theory can be viewed as formally substituting the scalar motif moment and cumulant quantities in the original theory with $b \times b$ matrices (Eq.~(\ref{E:recursive1_sub}-\ref{E:rsm_sub})). In \cite{Hu:2012vh}, we showed that the matrix expression for $\K_{n,m}$ and $\M_{n,m}$ are given by Eqs.~(\ref{E:explicit1}-\ref{E:mn_explicit}), where repetitive factors such as $\bfW^0\bfTheta_B$ appear in places of $\bfW^0\bfTheta$. Here $\bfTheta_B$ is a block diagonal generalization of $\bfTheta$ for the subpopulation approach.  In particular,
\beq
\bfTheta_{B}=
\left(
\begin{array}{ccc}
\bfTheta_1 & &\\
& \ddots &\\
& & \bfTheta_b
\end{array}
\right),
\eeq
where each diagonal block corresponds to a subpopulation.  Here $\bfTheta_i=\bfI_{N_i}-\bfu_{N_i} \bfu_{N_i}^T$ (where $\bfu_{N_i}=(1,\cdots,1)^T/\sqrt{N_i}$) is an ``original" $\bfTheta$ matrix, simply defined with population size $N_i$. 

Combining the above observations, we look for a partition of the network that will bring $\bfW^0\bfTheta_B \bfd$ as close to 0 as possible. First, consider the stochastic block model. Note that the $\bfTheta_B$ we defined above will map to $\bf 0$ any vector that is piecewise constant over the indices of each subpopulation.  Therefore, if we choose the network partition naturally provided by the stochastic blocks themselves, we obtain $\bfW^0\bfTheta_B \bfd=0$.  As expected, this partitioning results in very rapid decay of motif cumulants, and hence an ability to predict network coherence using only low order motif statistics (here, order 1; see Fig.~\ref{f:figure3}(a)). 

For the \BA{} network, there are no ``natural" subpopulations, but partitioning still leads to a significant improvement in predictions of network coherence.  In this case, continue to divide the network into just two subpopulations (Fig.~\ref{f:figure3} (b)).  The goal is to perform this division in so that it will minimize $\Vert \bfW^0\bfTheta_B \bfd \Vert_2$.  In practice, we instead consider the simpler question of minimizing  ${\lVert \bfTheta_B \bfd \rVert_2}$ as an approximation.  As noted above, $\lVert \bfTheta_B \bfd \rVert_2$ measures the error of a piecewise constant (over the indices of subpopulations) approximation of $\bfd$.  In Fig.~\ref{f:sub_pop_v}, we plot this error against a threshold parameter in node degree that is chosen to partition the network; this shows that the error is minimized at a cut-off degree ranking of roughly 30-40 (across different random realizations of a \BA{} network with the same parameters).  As expected from our heuristic arguments, this value is close to the value of the threshold that gave the most rapid convergence of the cumulant-based estimates of network covariance (degree ranking = 50, Fig.~\ref{f:sub_pop_v}).

\begin{figure}[H]
\centering
\includegraphics[scale=1.4]{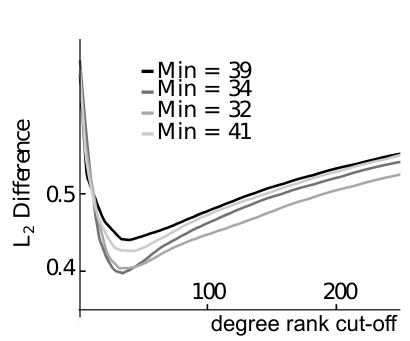}
\renewcommand{\figurename}{FIG.}
\caption{
$L_2$ norm of the difference between the degree list (normalized) and the piecewise constant vector (see text) given by a certain cut-off ranking of the degrees. Different shades are 4 realizations of \BA{} networks (with same parameters). The legend is the cut-off degree ranking that achieves the minimum of difference.}
\label{f:sub_pop_v}
\end{figure}

Up to this point we have defined motif cumulants, and shown how they can be used to make accurate predictions of coherence in average network activity.   These were results about second-order correlations (i.e., covariances)  averaged across node pairs.  We next extend the theory of motif cumulants to correlations of arbitrary order.

\section{Higher order correlations}
\label{S:hoc}
Here we show how to generalize our theory to relate higher order statistics of a network's dynamics to its architecture.  While the second-order results above can be used for both finite-valued stochastic systems (i.e., OU and jump processes) and coupled point processes, the higher-order results are \emph{only valid in their present form for finite-valued stochastic systems} (not point processes with delta function pulses).   Extensions to higher-order coherence for {interacting point processes} are  nontrivial and will be tackled elsewhere. 

The $k^{th}$ order 
cross-covariance function for the processes  in  Eq.~\eqref{e:lin} are  defined using joint cumulants of random variables,
\beq
\label{E:C_tensor_def}
\bfC_{y[k]}^{i_1i_2\cdots i_k}(\tau_1,\ldots,\tau_{k-1}) :=
\kappa \left(y_{i_1}(t), y_{i_2}(t+\tau_1), \cdots, y_{i_k}(t+\tau_{k-1}) \right).
\eeq
A generalization of the \WK{} theorem relates the Fourier transform of the higher order cumulant to the polyspectra $\bfS_{y[k]}^{i_1i_2\cdots i_k}$~\cite{Brillinger:1964ub} defined via the Fourier transform of the processes. 
\beq
\label{E:WK_hoc}
\begin{split}
\mathcal{F} (\bfC_{y[k]}^{i_1i_2\cdots i_k})
&=\bfS_{y[k]}^{i_1i_2\cdots i_k}
=\kappa \left(\bar{\yt}_{i_1}(\om_1+,\cdots,+\om_{k-1}), \yt_{i_2}(\om_1), \cdots, \yt_{i_k}(\om_{k-1}) \right)\\
&:=\lim_{T\rightarrow \infty }\frac{1}{T}
\sum_{\chi} (|\chi|-1)!(-1)^{|\chi|-1} \prod_{B\in \chi} \EVb{\prod_{j\in B} \yt_{i_j}(\om_{j-1})} \delta\left(\sum_{j\in B} \om_j\right)
\end{split}
\eeq
Here $\om_0=-\sum_{j=1}^{k-1} \om_j$, $\yt_{i_j}(\om)=\int_0^{T} dt e^{-2\pi i \om t}(y_{i_j}(t) -
  \mathbf{E}\left[y_{i_j}(t)\right])$, $\delta(z)=1$ when $z=0$ and $\delta(z)=0$ otherwise. The first sum is over all partitions $\chi$ of set $\{1,\cdots,k\}$, and $B$, as an element of $\chi$, is a subset of $\{1,\cdots,k\}$, $|\chi|$ is the number of partitions in $\chi$. To illustrate this formula, we first note that, at third order, it reduces exactly to the ``bispectrum"~\cite{Kim:1979,Huber:1971,Brillinger:1964ub}
\[
\mathcal{F} (\bfC_{y[3]}^{i_1 i_2 i_3}(\tau_1,\tau_2))
=\bfS_{y[3]}^{i_1 i_2i_3}(\om_1,\om_{2}):=\EVs{\bar{\yt}_{i_1}(\om_1+ \om_{2})\yt_{i_2}(\om_1) \yt_{i_3}(\om_{2})}.
\]

It is easy to see that Eq.~\eqref{E:WK_hoc} is multilinear in the variables $\yt_{i_j}$. Using Eq.~\eqref{e:lin_freq}, we can therefore generalize Eq.~\eqref{e:cov} to obtain the polyspectra of the processes $\bfy$ in terms of that for $\bfx$ via the propagation matrix $\bfPt = (\bfI -\At\bfW)^{-1}$: 
\beq
\label{e:hoc}
\bfS_{y[k]}^{i_1 i_2 \cdots i_k}(\om_1,\cdots, \om_{k-1})
=\sum_{j_1,\cdots, j_k}\tilde{\mathbf{P}}_{i_1j_1} (\om_0) 
\bfPt_{i_2j_{2}}(\om_1)\cdots\bfPt_{i_k j_k}(\om_{k-1})
\bfS_{x[k]}^{j_1 j_2 \cdots j_k}(\om_1,\cdots,\om_{k-1}).
\eeq
For example, replacing Gaussian white noise which appeared in the {OU} process with ``{P}oisson kicks", \emph{i.e.} considering a shot noise process, yields non-zero $\mathbf{S}_{x[3]}$.

Next, expanding $\bfPt=\sum_{n=0}^\infty (\At \bfW)^n$ in Eq.~\eqref{e:hoc} leads to an expression for polyspectra analogous to Eq.~\eqref{e:cov_epn_avg}: 

\beq
\label{e:cov_epn_hoc}
\langle \bfS_{y[k]}\rangle / S_{x[k]} =   \frac{1}{N^{k-1}}\sum_{n_1,\cdots,n_k=0}^\infty  g^{|n|}\M_{n_1,\cdots,n_k},
\eeq
where $\vert n \vert=\sum_{i=1}^k n_i$ and $g=N\At w$ as defined in Eq.~\eqref{e:cov_epn_avg}. The motif moments $\M_{n_1,\cdots,n_k} =\sum_{i_1,\cdots,i_k,j}(\bfW^{0n_1})_{i_1 j} \cdots (\bfW^{0n_k})_{i_k j}/N^{|n|+1}$. For simplicity, in the formula above we again set $\om_1=\cdots=\om_{k-1}=0$, and assume homogeneous dynamics for each node. Here, $\bfS_{x[k]}^{i_1 i_2 \cdots i_k}=S_{x[k]} \delta^{i_1 i_2 \cdots i_k}$ is a diagonal tensor, since the $\{x_i\}$ comprise an uncoupled and uncorrelated network. 

The most interesting aspect of Eq.~\eqref{e:cov_epn_hoc} are the motif moments $\M_{n_1,\cdots,n_k}$.  For dynamical coherence (and hence polyspectra) of order $k$, these motif moments are the frequencies of $k$-branch motifs with $n_j$ nodes on each branch.  Fig.~\ref{f:motif_3}(a) illustrates such a motif $\M_{1,1,1}$, for $k=3$ branches and $n_j \equiv 1$ node on each branch.  Importantly, these $k-$branch motifs are the only ones that appear at each order in the series of Eq.~\eqref{e:cov_epn_hoc}.

We note that higher-order correlations for more general cases, such as variable connection weights, heterogeneity in node dynamics, and common input can be treated similarly, using techniques in~\cite{Hu:2012vh}. 

\begin{figure}[H]
\centering
\includegraphics[scale=1.4]{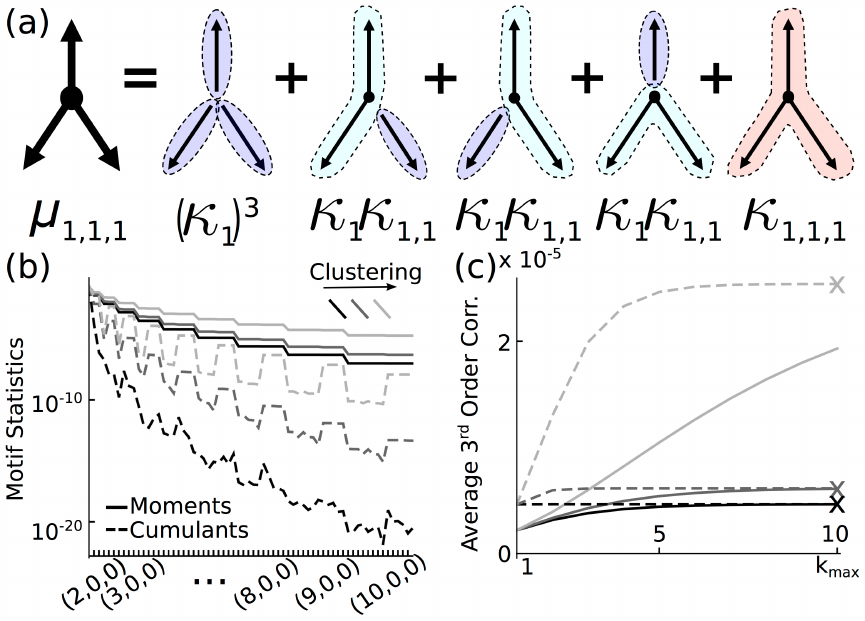}
% switched to .pdf to make compiling faster, change back to .png if needed
\renewcommand{\figurename}{FIG.}
\caption{(a) Cumulant decomposition of a three-branch motif. Panels (b) and (c) are counterparts of Fig.~\ref{f:figure2}(b) and (c) for three-branch motifs and bispectra: (b) decay of motif moments and cumulants with respect to order and (c) convergence estimated third order correlations by two approaches. In (b), the $(n,m,l)$ motifs are again increasingly ordered according to the order $n+m+l$. Because of symmetry, only motifs with $n\ge m\ge l$ are listed. Within each order, motifs are arranged by lexicographical order of $n$, $m$ and $l$, except that motifs with $l\neq 0$ are listed first. }
\label{f:motif_3}
\end{figure}

Thus far, we have shown via Eq.~\eqref{e:cov_epn_hoc} how network motifs --- quantified by the motif moments $\M_{n_1,\cdots,n_k}$ --- contribute to higher-order dynamical correlations.  The solid lines in Fig.~\ref{f:motif_3}(b) show that the motif moments can decay slowly.  The consequence is that motifs of high order (up to 10 or beyond) may be needed for a good approximation of third-order correlations (Fig.~\ref{f:motif_3}(c), solid lines).

It is therefore natural to ask whether the motif cumulant approach can be extended to higher order, and help approximate finer measures of coherence using information about only few lower order motifs. Although  the main ideas are similar as those at second order, derivations  at higher order are more cumbersome. We note that a significant simplification is offered by the  use of our new combinatorial formulation of motif cumulants (Sec.~\ref{S:motif_cumulants}).
%This motivates us to develop a prediction of dynamical coherence that will converge with only motifs of smaller size.  To do this, we next generalize the motif cumulant approach to higher orders dynamical correlation. 

First, we define multi-branch motif cumulants via their relationship with motif moments.
Specifically, we relate the motif moments $\{\M_*\}$ and motif cumulants $\{\K_*\}$ ($*$ stands for multiple indices, see below) via a combinatorial expression. This expression corresponds to the decomposition shown in Fig.~\ref{f:motif_3}(a) (cf. Fig.~\ref{f:figure3}(a)), where we have decomposed a $k$ branch motif into motifs with $k$ and fewer branches.  

We next enumerate all possible ways of decomposing the $k$ branch motif explicitly. Just as in Eqs.~(\ref{E:recursive1},\ref{E:recursive2}), this is done according to how a $k$-branch motif is partitioned at the ``root" of the branches (the sum over $\chi$ in Eq.~\eqref{E:k_cumulant_exp}). In other words, we examine which the $k$-branches are grouped together as one component in the decomposition. To see what this means, examine the coloring in Fig.~\ref{f:motif_3}(a):  for different terms in the decomposition, the components that are shaded with the seam color have been grouped together. The remaining enumeration is about how each branch breaks up (into chains $\K_{B^i_j}$, corresponding to the sum over $\pi_i$ in in Eq.~\eqref{E:k_cumulant_exp}). \beq
\label{E:k_cumulant_exp}
\M_{n_1,\cdots, n_k}= \sum_{\pi_1,\cdots,\pi_k}
\left(\prod_{i=1}^k
 \prod_{j=2}^{t_i}\K_{B^i_j}\right) 
 \left( \sum_{\chi} \prod_{\{i_1,\cdots,i_s\} \in \chi} \K_{B_1^{i_1},\cdots, B_1^{i_s}} \right)
\eeq
Here $\pi_i =\{B^i_1,\cdots,B^i_{t_i}\}$ is an ordered partition of $n_i$, $\chi$ is a partition of the set $\{1,\cdots, k\},$ and $\{i_1,\cdots,i_s\}$ is one subset of indices that are grouped according to the partition $\chi$.

To generalize Eq.~\eqref{e:rsm}  we use the following analog of Prop. 4.1 in~\cite{Hu:2012vh}.

\begin{theorem}
\label{TH:resumming_k}
For a pair of motif moments and cumulants $\{\M_*\}$ and $\{\K_*\}$ with up to $k$ branches,
\beq
\begin{split}
&\sum_{n_1,\cdots,n_k=0}^{\infty} \M_{n_1,\cdots,n_k}  =\left(1-\sum_{n=1}^{\infty} \K_{n} \right)^{-k}\left(\sum_{\pi} 
f(\pi) \prod_{B\in \pi, B>1} 
\left( \sum_{n_1,\cdots, n_B=1}^{\infty} \K_{n_1,\cdots,n_B}  \right)
\right),
\end{split}
\label{e.thmsum}
\eeq
assuming all series converge absolutely and $\vert \sum_{n=1}^{\infty} \K_{n}\vert<1$. The sum with index $\pi$ is through all partitions of $k$. When indices for the product are empty, we take the corresponding terms to be 1. The Fa\`{a} di Bruno coefficient
\beq
f(\pi)=\left(\sum_{B\in \pi} B\right)! \cdot \left(\prod_{B\in \pi}  B! \right)^{-1}
\cdot  \left(\prod_{B\in \pi}  \#B! \right)^{-1}
\eeq 
is the number of partitions of set $\{1,\cdots, k\}$ that correspond to a partition $\pi$ of integer $k$. Here $\nu$ is the set of unique $B$'s in $\pi$, and for every $B\in \nu$, $\#B$ is the number of repetitions of $B$ in $\pi$. 
\end{theorem}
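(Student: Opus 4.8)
The plan is to prove \eqref{e.thmsum} by substituting the combinatorial definition \eqref{E:k_cumulant_exp} of the multi-branch motif moments into the left-hand side and resumming term by term, in the same spirit as the proof of Prop.~4.1 of~\cite{Hu:2012vh} (the $k=2$ case) but with the richer partition bookkeeping that $k$ branches require. After substitution, and using the stated absolute convergence together with $\vert\sum_n\K_n\vert<1$ to justify reordering, the left-hand side becomes a sum over a set partition $\chi$ of $\{1,\dots,k\}$ and, for each branch $i$, an ordered partition $\pi_i=\{B^i_1,\dots,B^i_{t_i}\}$ (whose parts sum to the index $n_i$, now determined rather than summed over independently). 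The crucial structural point is that the summand factorizes over the blocks of $\chi$: a block $\beta=\{i_1,\dots,i_s\}$ carries exactly one ``head'' factor $\K_{B^{i_1}_1,\dots,B^{i_s}_1}$ (an $s$-branch cumulant) together with the $s$ ``tail'' factors $\prod_{l=1}^{s}\prod_{j\ge 2}\K_{B^{i_l}_j}$, each tail being a product of chain cumulants.

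Next I would evaluate the block sums. Each of the $k$ tails ranges over compositions of arbitrary length $\ge 0$ with parts $\ge 1$, so each contributes a geometric factor $\sum_{r\ge 0}(\sum_{n\ge 1}\K_n)^r=(1-\sum_{n\ge 1}\K_n)^{-1}$; since there are exactly $k$ branches in all, this produces the prefactor $(1-\sum_{n\ge 1}\K_n)^{-k}$. For a singleton block $\{i\}$ the ``head'' is itself a chain cumulant, which simply merges with branch $i$'s tail, and the merged branch sums to the single-branch moment series $\sum_{n_i\ge 0}\M_{n_i}$ (cf.~\eqref{E:recursive1}) --- consistently equal to the geometric factor above, and also the mechanism by which terms with $n_i=0$ (using $\M_0=1$) are absorbed. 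For a block $\beta$ of size $s\ge 2$, the head sum $\sum_{b_{i_1},\dots,b_{i_s}\ge 1}\K_{b_{i_1},\dots,b_{i_s}}$ depends --- by the branch-permutation symmetry of the cumulants inherited from that of the moments --- only on $s$, and equals $\sum_{n_1,\dots,n_s\ge 1}\K_{n_1,\dots,n_s}$. Collecting everything, $\sum_{n_1,\dots,n_k\ge 0}\M_{n_1,\dots,n_k}=\bigl(1-\sum_{n\ge 1}\K_n\bigr)^{-k}\sum_{\chi}\prod_{\beta\in\chi,\,\vert\beta\vert\ge 2}\bigl(\sum_{n_1,\dots,n_{\vert\beta\vert}\ge 1}\K_{n_1,\dots,n_{\vert\beta\vert}}\bigr)$.

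Finally I would convert the sum over set partitions $\chi$ into the sum over integer partitions $\pi$ of $k$ appearing in \eqref{e.thmsum}, by grouping set partitions according to their multiset of block sizes. The number of set partitions of $\{1,\dots,k\}$ with a given block-size type $\pi$ is the standard multinomial (Bell / Fa\`{a} di Bruno) coefficient $k!\,\bigl/\,\bigl(\prod_{B\in\pi}B!\cdot\prod_{B}(\#B)!\bigr)$, which is exactly $f(\pi)$; and for any $\chi$ of type $\pi$ the block product equals $\prod_{B\in\pi,\,B>1}\bigl(\sum_{n_1,\dots,n_B\ge 1}\K_{n_1,\dots,n_B}\bigr)$, since singleton blocks ($B=1$) have already been folded into the prefactor. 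This yields \eqref{e.thmsum}.

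I expect the main obstacle to be careful bookkeeping rather than any deep idea. Two points deserve attention. First, the interchange of the multiple infinite summations must be justified under the stated hypotheses --- routine, but it is precisely where absolute convergence and $\vert\sum_n\K_n\vert<1$ are used. Second, and more delicately, the displayed formula \eqref{E:k_cumulant_exp} is written for strictly positive branch lengths, so the terms of the left-hand side with some $n_i=0$ (empty branches) are not literally covered and must be folded in; this happens automatically once one observes that an empty branch admits only the trivial decomposition and therefore sits inside the single-branch factor $\sum_{n_i\ge 0}\M_{n_i}$ of a singleton block (equivalently, it is accounted for by the all-singletons term $\pi=\{1,\dots,1\}$, for which $f(\pi)=1$ and the head product is empty). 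Verifying that these edge cases reproduce the lower-$k$ instances of the identity, and that the symmetry assertion used for the head sums indeed holds, completes the argument.
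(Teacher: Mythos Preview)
Your proposal is correct and follows the paper's overall route: substitute \eqref{E:k_cumulant_exp}, factor out one geometric tail $(1-\sum_n\K_n)^{-1}$ per branch to produce the prefactor $(1-\sum_n\K_n)^{-k}$, leave a sum over set partitions $\chi$ carrying multi-branch cumulant factors only on blocks of size $\ge 2$, and then convert set partitions to integer partitions via $f(\pi)$. The one place you diverge is in the treatment of the $n_i=0$ terms. The paper (Eq.~\eqref{e.nonzero}) first stratifies the left-hand side by the subset of strictly positive indices, applies \eqref{E:k_cumulant_exp} only on that subset, and then ``lifts'' each resulting set partition $\chi_k$ to a partition $\chi$ of the full index set by adjoining the zero-length indices as extra singletons; this produces an additional inner sum $\sum_{\chi_k|\chi}q^{k-m_2}(1-q)^{k'-k}$ that is collapsed to $1$ via a binomial identity. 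You instead extend \eqref{E:k_cumulant_exp} to $n_i\ge 0$ from the outset (an empty branch forces $\{i\}$ to be a singleton with the trivial decomposition), so that a singleton block's total contribution is directly $\sum_{n_i\ge 0}\M_{n_i}=(1-\sum_n\K_n)^{-1}$ and the lifting/binomial step is bypassed entirely. This is a legitimate and slightly more economical variant; the caveats you flag --- justifying the extension of \eqref{E:k_cumulant_exp} to empty branches, checking that it reproduces the reduced moment when some $n_i=0$, and verifying the branch-permutation symmetry of $\K_{n_1,\dots,n_s}$ --- are exactly the points that need to be made precise.
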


We provide a proof of Theorem \ref{TH:resumming_k} in Appendix~\ref{S:hoc} using the combinatorial relation given by Eq.~\eqref{E:k_cumulant_exp}. We note that the proof itself is different from the matrix based method used in~\cite{Hu:2012vh} to obtain the second order correlation result. Moreover, this new approach can be easily generalized to the case of subpopulations  (see below and Appendix~\ref{S:sub_pop_rsm}).

To establish an expression for average higher order correlations, first note that Eq.~\eqref{E:k_cumulant_exp} is ``homogeneous in degree," so that if it is satisfied for a pair of motif moments and cumulants $\{\M_*\}$, $\{\K_* \}$, it will also be satisfied for scaled pairs $\{g^{\vert *\vert} \M_*\}$, $\{g^{\vert *\vert} \K_* \}$.  Thus, the same relationship holds for scaled motif statistics.  Applying Theorem~\ref{TH:resumming_k} to Eq.~\ref{e:cov_epn_hoc}, using scaled motif statistics, we obtain

\beq
\label{E:rsm_hoc}
\begin{split}
&\frac{\langle\bfS_{y[k]} \rangle}{S_{x[k]}}
=\frac{1}{N^{k-1}}\left(1-\sum_{n=1}^{\infty} g^n \K_{n} \right)^{-k}\left(\sum_{\pi} 
f(\pi)
\prod_{B\in \pi, B>1} 
\left( \sum_{n_1,\cdots, n_B=1}^{\infty} g^{\vert n \vert} \K_{n_1,\cdots,n_B}  \right)
\right).
\end{split}
\eeq

As an example, the motif cumulant expansion of the average third order correlation is
\beqr
\label{e:rsm_3}
&& \frac{\langle\bfS_{y[3]} \rangle}{S_{x[3]}}
= \frac{1}{N^2}
\left(1 - \sum_{n=1}^\infty g^n \K_n  \right)^{-3}
\left(1 + 3 \sum_{l,m=1}^\infty  g^{l+m}  \K_{l,m}
+\sum_{l,m,n=1}^\infty  g^{l+m+n}  \K_{l,m,n} \right).
\eeqr

Fig.~\ref{f:motif_3}(b-c) are counterparts of Fig.~\ref{f:figure2}(b-c) that numerically compare motif moment and cumulant approaches for stochastic block networks. They show numerically that our observations for pairwise correlations generalize to higher orders (see also  Fig.~\ref{f:figure2_3rd} in the Appendix for an application to the \BA{} network). First, we show that higher order correlations can depend  on long
paths through the network (motif moments, solid lines). Second, when predicting average correlation using motif statistics up to a given order, an approximation in terms of motif cumulants is more accurate than one in terms of motif moments (panel (c)).  Third, the order of motif statistics needed to approximate correlations again increases with network homogeneity (compare lines of different shade). 

Finally, the subpopulation approach generalizes to higher order.  The resulting general formula is given in Corollary~\ref{c:rsm_hoc_sub}, and derived in Appendix~\ref{S:multi_HOC}.  Moreover, this result offers similar advantages in predicting correlations from lower-order motif cumulants (see Appendix \ref{S:sub_pop_rsm} and Fig.~\ref{f:figure3_3rd}).

\section{Conclusion}

Network motifs have been used previously to  link  local network connectivity and global coherence in networks with linearly interacting components~\cite{Pernice:2011,Trousdale:2012,Hu:2012vh}.  Here, we 
developed this theory in order to make it both more general and more broadly applicable.  We first showed that a motif-based approach
 introduced in prior work has a  probabilistic interpretation in terms of quantities closely related to key statistical concepts.  We refer to these as motif cumulants. 
 
 Next, we showed that the link between network architecture and dynamical correlation -- through motif cumulants -- can be {\it complex} in clustered and heterogeneous networks.  This complexity can result in the apparently irreducible contribution of long paths  to network-wide coherence.  However, the motif cumulant approach can be extended to reduce this complexity -- and hence the size and number of the network features that must be sampled empirically -- substantially. Finally, we showed how the theory naturally extends to higher-order dynamical correlations, for a broad subset of the dynamical models under study.  This provides a direct link between local network architecture and global dynamics at every order.  

An important feature of our approach for experimental settings is that the prevalence of only a limited number of motifs is needed in order to predict network-wide dynamical coherence.  Moreover, these motifs are small, involving only a few nodes at a time.  This property could provide a way forward in experimental settings --  as in studies of networks of genes \cite{Alon:2007uu} or neurons \cite{Song:2005, Perin:2011} -- in which networks are quantified by sampling a limited number of edges measured simultaneously.  The resulting motif prevalences are precisely the quantities needed to define the motif cumulants that are at the core of our approach.

The present results suggest many opportunities for future research.  At the top of the list is extending the connection between network motifs and higher-order dynamical correlations to apply to coupled point process models.  Somewhat surprisingly, we have found both numerically and  analytically (in special cases) that the linear response approach (Eq.~\eqref{e:hoc}) that extends to all orders for finite-valued stochastic processes fails to extend beyond second order for coupled point process models, where each node generates ``spike" events (data not shown).  Future research will explore modifications of the linear response approach that may re-establish a useful description of higher order correlations for these network models.  This would open the door to studies of plasticity and learning of network connections in neural systems, where interactions are governed by spike times~\cite{Pfister:2006us}. 

We close by mentioning two further extensions of special interest.  The first concerns applications to stimulus-encoding networks.   Such networks can be heterogeneous and composed of groups of nodes, each with different connectivity rules and, importantly, responding differently to an external stimulus.  Networks with spatial structure provide a natural way in which such connectivity and responses might develop.  For such a network, our subpopulation motif approach could predict the levels of dynamical coherence within and between each group of nodes.  From here, decoding techniques could quantify the level of information that the neural groups carry about the stimulus itself, and how this depends on the correlation structure induced by different network motifs~\cite{Aver:2006,gawne93,Cohen:2011eh,Zohary:1994ei,Sompolinsky:2001hh,Abbott:1999ul,Zylberberg:2012ty,HuZSB:2013}.  

A final open problem concerns the invertibility of the architecture-to-dynamics question considered here.  Given measurements of network-wide coherence, what can we conclude about network architecture?  The network motif approach can narrow  the possibilities, especially when higher-order correlations are considered, but we do not yet know what additional assumptions are required to yield a unique solution.

\section*{Acknowledgements}

\noindent We thank C. Hoffman, K. Bassler and B. Doiron for helpful insights and suggestions.  This work was supported by NSF grant DMS-1122094, a joint NSF/NIGMS grant R01GM104974, and a Texas NHARP award to KJ, and by a Career Award at the Scientific Interface from the Burroughs Wellcome Fund and NSF Grants DMS-1056125 and DMS-0818153 to ESB.

\section*{Appendix}

%htbp
\begin{table}[H]\caption{Notations}
\begin{center}% used the environment to augment the vertical space
% between the caption and the table
\begin{tabular*}{\textwidth}{ l @{\extracolsep{\fill}}  r}
\hline
$N$, $N_\alpha$ & size of the whole population or subpopulation $\alpha$\\
$y_i$, $\bfy$ & activity of node $i$ or that for all nodes combined as a column vector \\
$x_i$, $\bfx$ & baseline activity of node $i$ in the absence of coupling between nodes \\
$A_i(t)$ & linear response kernel of node $i$\\
$\bfW$, $\bfW^0$ & connection and adjacency matrix\\
$w$ & connection strength\\
$g=N\At w$ & effective coupling strength\\
$\bfC_y(\tau)$  & the matrix with cross-covariances of all node pairs of $\bfy$~\cite{corr_def}\\
$\bfC_{y[k]}$ & the $k$-tensor with all $k$-th order correlations of $\bfy$, Eq.~\eqref{E:C_tensor_def}\\
$\bfS_y(\om)$, $\bfS_{y[k]}$ & Frequency domain counterparts of $\bfC_y(\tau)$ and $\bfC_{y[k]}$, \\ & see~\cite{ft_sp_def} and Eq.~\eqref{E:WK_hoc}\\
$\M_{*}$, $\K_{*}$ & motif moment and cumulant, $*$ stands for any subscript describing\\ & the length of branches, such as $n$ or $n,m$, Eq.~(\ref{e:cov_epn_avg}, \ref{E:recursive1}-\ref{E:recursive2},\ref{e:cov_epn_hoc},\ref{E:k_cumulant_exp})\\
$\pi$, $B\in \pi$ & partition (or ordered partition) of an integer and its components\\
$\chi$, $\{i_1,\cdots,i_s\}\in \chi$ & partition of a set $S$ and its components (a subset of $S$)\\ 
$\bfPt=(\bfI-\At\bfW)^{-1}$ & propagation factor\\
$\langle \cdot \rangle$ & empirical average of matrix or \\ & tensor (sum  of entries divided by their number)\\
$\mathcal{F}(z)$ or $\tilde{z}$ & Fourier transform of $z$ (transform taken \\ & entry-wise if $z$ is a matrix or tensor)\\
$\bar {z}$ & complex conjugate\\
$\bfW^T$ & matrix transpose\\
$\Psi(\bfW)$ & spectral radius\\
\hline
\end{tabular*}
\end{center}
\label{T:notation}
\end{table}

\subsection{Relating the  \OU{} model to Eq.~\eqref{e:lin} of the main text}
\label{S:OU_relation}

We used a simplified form of the canonical \OU{} (OU) model in all examples where
we consider second-order statistical quantities. This model is related to Eq.~\eqref{e:lin}
in the main text by writing the dynamics
\beq
\label{e:OU}
\dot{\bfy}=-\bfLambda \bfy(t)+\bfW \bfy(t) +\bfxi(t).
\eeq
where $\bfy(t)=(y_1(t),\cdots,y_N(t))^T$.  The diagonal matrix $\bfLambda=\tau^{-1}\bfI$ sets the intrinsic timescale of the nodes, and 
the column vector $\bfxi(t)$ is composed of independent white noise processes. Eq.~\eqref{e:OU} above is then equivalent to  Eq.~\eqref{e:lin} of the main text with
$A_i(t) = A(t)=e^{-t/\tau}\Theta(t)$.
Upon coupling, the baseline activity of a node in the network, $\bfx_i(t)=(A * \bfxi)(t)$,  
is perturbed by filtered input from other nodes, $A*\sum_j\bfW_{ij} y_j(t)$.

\subsection{Further examples}
\label{S:more_eg}

Here we provide details of several computational findings referred to in the main text.    Each addresses the generality and applicability of our results.  First, Fig.~\ref{f:figure2_SF} shows that our main results contrasting motif moments and cumulants hold for the \BA{} network model, which has significantly more complex structure than the stochastic block models studied in Fig.~\ref{f:figure2} of the main text.

Next, Figs.~\ref{f:figure2_3rd} and~\ref{f:figure3_3rd} present analogous results for third-order correlations in network output.  Specifically, Fig.~\ref{f:figure2_3rd} shows that these third-order correlations depend significantly on the details of the underlying graph structure (i.e., the degree of clustering).  Moreover, this dependence can be efficiently predicted via motif cumulants. Fig.~\ref{f:figure3_3rd} demonstrates that the subpopulation approaches continue to enhance the accuracy of our predictions -- if the populations are correctly defined, levels of triplet correlations can be predicted from lower-order motifs.

Fig.~\ref{F:PF_vec} provide numerical evidence for our claim that the PF vector for a general class of networks is closely approximated by the degree list (see Sec.~\ref{S:heter_sub}.\ref{S:partition}).

\begin{figure}[H]
\centering
\includegraphics[scale=1.4]{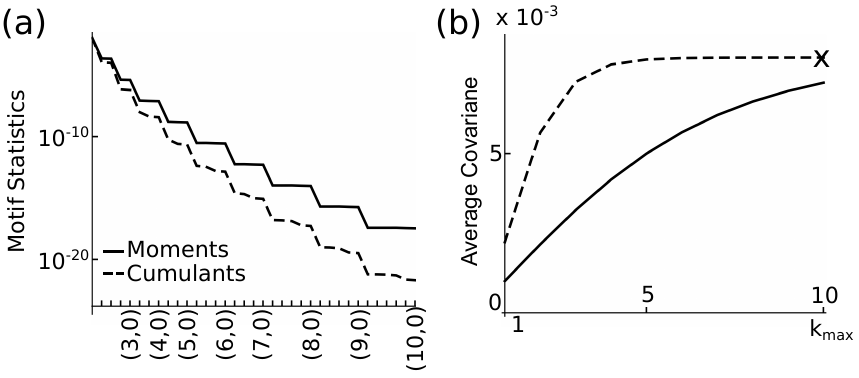}
\renewcommand{\figurename}{FIG.}
\caption{Same as Fig.~\ref{f:figure2}(b,c) of the main text but for the \BA{} model. (a) The magnitude of motif cumulants (dashed lines) and moments (solid lines) for a \BA{} model network. Motifs $(n,m),~n\ge m$ are grouped first by $n+m$ and then  arranged by increasing $n$. (b) Approximations of average covariances using motif moments (truncating Eq.~\eqref{e:cov_epn_avg}, solid lines) and cumulants (truncating Eq.~\eqref{e:rsm}, dashed lines) up to order $k_{\max}$. Exact values (direct evaluation of Eq.~\eqref{e:cov}) are labeled by crosses: a \BA{} network of size 1000 and connection probability 0.01.}
\label{f:figure2_SF}
\end{figure}

\begin{figure}[H]
\centering
\includegraphics[scale=1.4]{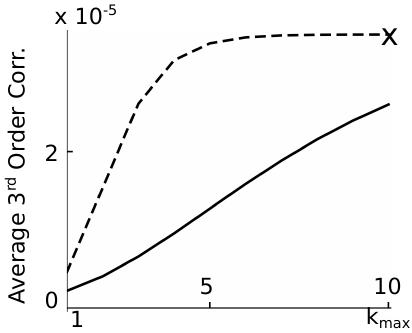}
\renewcommand{\figurename}{FIG.}
\caption{Same plot as Fig.~\ref{f:figure2_SF}(b) but for average third order correlations $\langle \bfS_{y[3]}\rangle / S_{x[3]}$.  Approximations using motif moments (solid lines) and cumulants (dashed lines) up to order $k_{\max}$ for a \BA{} network of size 1000 and connection probability 0.01. }
\label{f:figure2_3rd}
\end{figure}

\begin{figure}[H]
\centering
\includegraphics[scale=1.4]{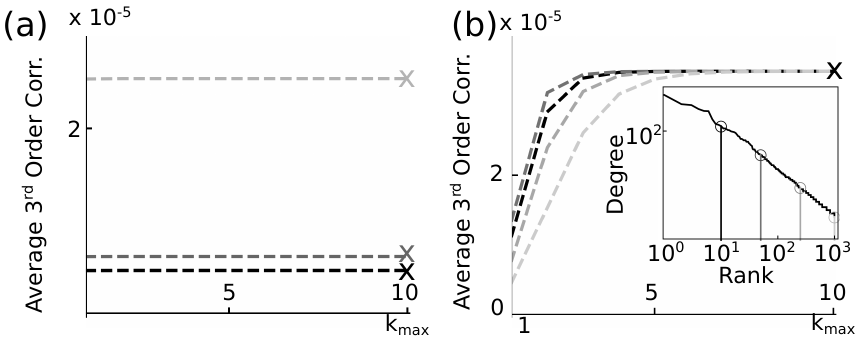}
\renewcommand{\figurename}{FIG.}
\caption{Same plots as Fig.~\ref{f:figure3} of the main text but for average third order correlations $\langle \bfS_{y[3]}\rangle / S_{x[3]}$. Approximations using the subpopulation cumulant approach by truncating at order $k_{\max}$, the exact values (direct evaluation of Eq.~\eqref{e:hoc} of the main text) are labeled by crosses: (a) stochastic block model networks of Fig.~\ref{f:figure2}  of the main text (the colors are the same) divided into two subpopulations with differing connectivities; (b) the \BA{} network of SI Fig.~\ref{f:figure2_SF} divided into two subpopulations according to different thresholds on the sum of in- and out-degrees  (different colors, see also the inset, which displays the cutoffs).}
\label{f:figure3_3rd}
\end{figure}

\begin{figure}[H]
\centering
\includegraphics[width=4.7in]{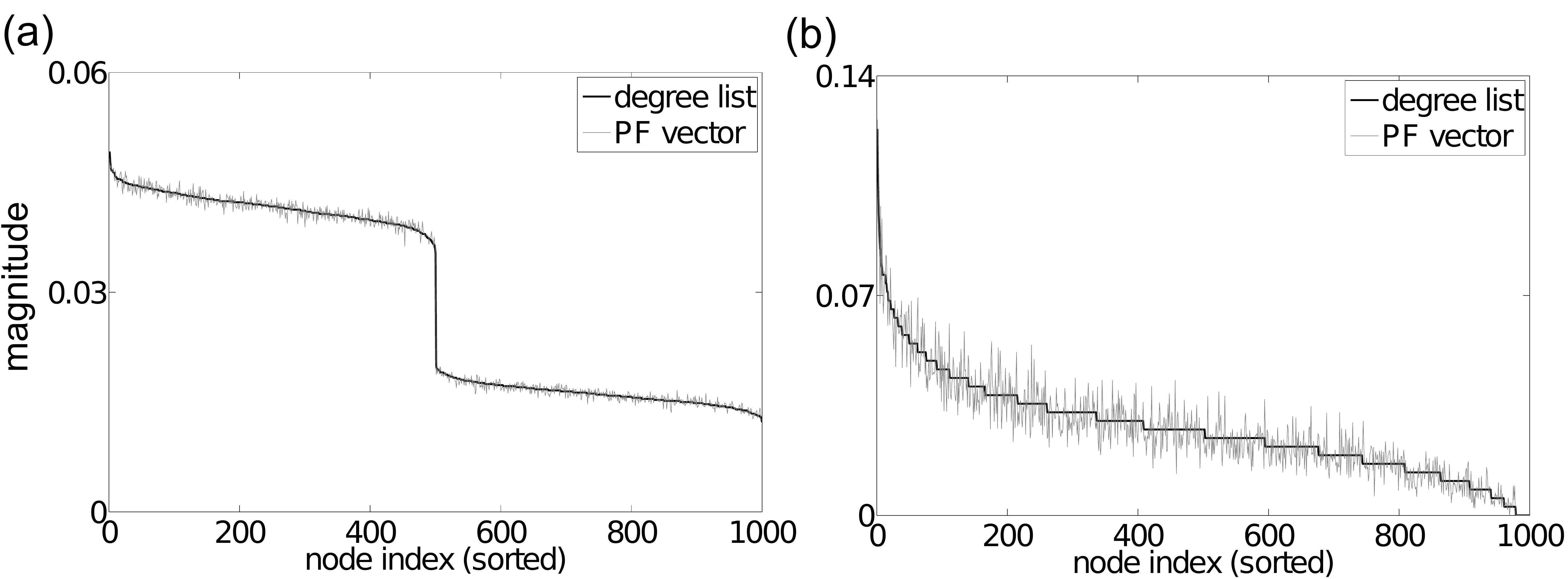}
\renewcommand{\figurename}{FIG.}
\caption{Comparing the PF vector (gray) and the in-degree list (black) for a stochastic block network (a) and a \BA{} network (b). The vectors are normalized to have unit $L_2$ norm and plotted by the indices of nodes, which are ordered in descending in-degree. The stochastic block network has $s1=1.44\sqrt{0.2}$ and is the same one in Fig.~\ref{f:figure2} (b-c). The \BA{} network is the same one in Fig.~\ref{f:figure3} (b).}
\label{F:PF_vec}
\end{figure}

\subsection{Details of numerical results}
\label{S:num_details}

Here we provide a detailed description of the computational examples in the main text and the appendix. This includes all parameters describing the dynamics of nodes and connections, and our methods of generating random networks.

In Fig.~\ref{F:1} of the main text, we calculated correlations for an OU system (see Eq.~\eqref{e:OU} of the main text) with $\tau = 1$, $\bfxi$ having unit intensity, and
$$
\bfW = \left[\begin{matrix}0 & -0.75 \\ -0.75 & 0 \end{matrix}\right].
$$

In plots of approximations of average second and third order covariances, i.e. Fig.~\ref{f:figure2}(b), \ref{f:figure3}(a,b)  of the main text, SI Figs.~\ref{f:figure2_SF}(b), \ref{f:figure2_3rd}(a,b), and \ref{f:figure3_3rd}(a,b), the parameters $\At$ and $w$ are chosen so that $N\At w p=0.4$. Note that the choice of $S_x$ (resp. $S_{x[3]}$ at third order) will not affect the normalized quantity $\langle \bfS_y \rangle /S_x$ (resp. $\langle \bfS_{y[3]} \rangle /S_{x[3]}$), and can be set to 1.

The \BA{} networks in Fig.~\ref{f:figure3}(b), and SI Figs.~\ref{f:figure2_SF}(a,b), \ref{f:figure2_3rd}(b), and \ref{f:figure3_3rd}(b) are generated by a directed \BA{} model similar to that in \cite{Prettejohn:2011tu}. One starts with a ``core" of $Np$ nodes, randomly connected with connection probability 0.5. After that, $N-Np$ nodes are added to the graph. When adding a new node $i+1$, it will form exactly $Np$ connections with the existing nodes $1,\cdots, i$. Those connections are distributed among existing nodes according to probabilities that are proportional to the sum of in- and out- degree of each node. The direction of the connection, whether into node $i+1$ or out of node $i+1$, is chosen independently with probability 0.5. The code implementing this algorithm is available upon request.

\subsection{Explicit expressions for motif cumulants}
\label{S:exp_motif}

Here, we will prove that the following matrix expressions for $\K_n$ and $\K_{n,m}$ introduced in~\cite{Hu:2012vh} are equivalent to the recursive definition in Eqs.~(\ref{E:recursive1},\ref{E:recursive2}) of the main text:
\begin{equation}\label{E:explicit1}
\begin{split}
\K_{n} &= \frac{1}{N^{n+1}}\sum_{i,j}(\underbrace{\bfW^0 \bfTheta \bfW^0 \cdots \bfTheta \bfW^0}_{n \mathrm{\ factors\ of\ }   \bfW^0})_{ij}\\
&= \frac{1}{N^{n}} \bfu^T \left[ \left(\bfW\bfTheta\right)^{n-1} \bfW\right] \bfu\\
&= \frac{1}{N^{n}}\bfu^T \bfW^\theta_{n}\bfu,
\end{split}
\end{equation}
\begin{equation}\label{E:explicit2}
\begin{split}
\K_{n,m} &= \frac{1}{N^{n+m+1}}\sum_{i,j}(\underbrace{\bfW^0 \bfTheta \bfW^0 \cdots \bfTheta \bfW^0}_{n \mathrm{\ factors\ of\ }   \bfW^0} \bfTheta \underbrace{\bfW^{0T} \bfTheta \bfW^{0T} \cdots \bfTheta \bfW^{0T}}_{m \mathrm{\ factors\ of\ } \bfW^{0T}})_{ij}\\
&= \frac{1}{N^{n+m}} \bfu^T \left[ \left(\bfW\bfTheta\right)^{n-1} \bfW\bfTheta\bfW^T \left(\bfTheta\bfW^T\right)^{m-1}\right] \bfu\\
&= \frac{1}{N^{n+m}}\bfu^T \bfW^\theta_{n} \bfTheta \bfW_m^\theta\bfu,
\end{split}
\end{equation}
where
\[
 \bfW^\theta_{n} = \left[\bfW^0 \bfTheta\right]^{n-1}\bfW^0
\]
and $\bfu=(1,\cdots,1)^T/\sqrt{N}$, $\bfH=\bfu \bfu^T$, $\bfTheta=\bfI-\bfH$.

We see that $\bfW^0 \bfTheta$, $\bfTheta \bfW^T$ are recurring factors in $\K_n$ and $\K_{n,m}$. Using the relation of spectral radius and matrix norm~\cite{Pernice:2011}, one can show that the asymptotic decay speed of $\K_{*}$ is determined by the spectral radii of these factors. Interestingly, it is easy to show that $\Psi(\bfW^0 \bfTheta)=\Psi(\bfTheta \bfW^0\bfTheta)=\Psi(\Theta \bfW^0)$ hence these spectral radii coincide. A similar argument relates the decay of $\M_{*}$ with $\Psi(\bfW^0)$ (Eq.~\eqref{E:mn_explicit}).

We prove only that Eq.~\eqref{E:explicit2} holds, since a nearly identical, but simpler, proof verifies Eq.~\eqref{E:explicit1}. First, recalling that $\M_{n,m} = \langle \bfW^{0n}\left(\bfW^{0m}\right)^T \rangle / N^{n+m-1}$, it is straightforward to show that
\begin{equation}\label{E:mn_explicit}
\M_{n,m} = \frac{1}{N^{n+m}}\bfu^T (\bfW^0)^n (\bfW^{0T})^m \bfu.
\end{equation}
Substituting $\bfI = \bfTheta+\bfH$ between every subsequent appearance of the adjacency matrix $\bfW^0$ gives
\begin{equation}\label{E:mn_explicit_sub}
\M_{n,m} = \frac{1}{N^{n+m}}\bfu^T \left[\bfW^0(\bfTheta+\bfH)\right]^{n-1}\bfW^0(\bfTheta+\bfH)\bfW^{0T} \left[\left(\bfTheta+\bfH\right)\bfW^{0T}\right]^{m-1} \bfu.
\end{equation}
By expanding across all sums of $\bfTheta+\bfH$ except the central one (between the terms $\bfW^0,\bfW^{0T}$), and noting that there is an obvious bijection between a pair of compositions of the integers $n$ and $m$, \emph{i.e.}, $\{n_1,\ldots,n_t\} \in \mathcal{C}(n), \{m_1,\ldots,m_s\} \in \mathcal{C}(m)$, and a term of the form
$$
\left[ \prod_{i=1}^{t-1} \left( \bfW^\theta_{n_i} \bfH \right) \right] \left[  \bfW^\theta_{n_t}(\bfTheta+\bfH) \bfW^\theta_{m_s}\right]\left[ \prod_{j=1}^{s-1} \left( \bfH \bfW^\theta_{m_j}  \right) \right]
$$
we may write (using $\bfH=\bfu\bfu^T$)
\begin{equation}\label{E:cum_mom_relation}
\begin{split}
\M_{n,m} &= \frac{1}{N^{n+m}}\bfu^T \left\{ \sum_{\substack{\{ n_1,\ldots,n_t\} \in \mathcal{C}(n) \\ \{ m_1,\ldots,m_s\} \in \mathcal{C}(m)}}\left[ \prod_{i=1}^{t-1} \left( \bfW^\theta_{n_i} \bfH \right) \right] \left[  \bfW^\theta_{n_t}(\bfTheta+\bfH) \bfW^\theta_{m_s}\bfu\right]\left[ \prod_{j=1}^{s-1} \left( \bfH \bfW^\theta_{m_j}  \right) \right]\right\}\bfu\\
&=\frac{1}{N^{n+m}} \sum_{\substack{\{ n_1,\ldots,n_t\} \in \mathcal{C}(n) \\ \{ m_1,\ldots,m_s\} \in \mathcal{C}(m)}}\left[ \prod_{i=1}^{t-1} \left( \bfu^T \bfW^\theta_{n_i} \bfu \right) \right] \left[ \bfu^T \bfW^\theta_{n_t}(\bfTheta+\bfu\bfu^T) \bfW^\theta_{m_s}\bfu\right]\left[ \prod_{j=1}^{s-1} \left( \bfu^T \bfW^\theta_{m_j} \bfu \right) \right]\\
&= \sum_{\substack{\{ n_1,\ldots,n_t\} \in \mathcal{C}(n) \\ \{ m_1,\ldots,m_s\} \in \mathcal{C}(m)}}\left[ \prod_{i=1}^{t-1} \left(  \frac{1}{N^{n_i}}\bfu^T \bfW^\theta_{n_i} \bfu \right) \right] \left[\frac{1}{N^{n_t+m_s}} \bfu^T \bfW^\theta_{n_t}(\bfTheta+\bfu\bfu^T) \bfW^\theta_{m_s}\bfu\right] \\
&\hspace{2in}\cdot\left[ \prod_{j=1}^{s-1} \left( \frac{1}{N^{m_j}}\bfu^T \bfW^\theta_{m_j} \bfu \right) \right]
\end{split}
\end{equation}
If $t=1$, we define the product $\left[ \prod_{i=1}^{t-1} \left( \bfW^\theta_{n_i} \bfH \right) \right] = \bfI$.

\par
We now prove Eq.~\eqref{E:explicit2} by induction, assuming Eq.~\eqref{E:explicit1} holds. First, when $n = m = 1$, the only compositions are trivial (i.e., $\pi_1 = \pi_2 = \{1\}$). Equating in this case the right-hand sides of Eq.~\eqref{E:recursive2} of the main text and  Eq.~\eqref{E:cum_mom_relation} gives that
$$
 \K_{1,1} + \left(\K_1\right)^2 = \frac{1}{N^2}\bfu^T \bfW_1^\theta \bfTheta \bfW_1^\theta \bfu + \left(\frac{1}{N}\bfu^T\bfW_1^\theta\bfu\right)^2.
$$
Since Eq.~\eqref{E:explicit1} for $n=1$ gives that
$$
\K_1 = \frac{1}{N}\bfu^T\bfW_1^\theta\bfu,
$$
we have that Eq.~\eqref{E:explicit2} holds for $n=m=1$. Next, assume Eq.~\eqref{E:explicit2} is true for all $ (p,q)$ such that $p\leq n$ and $q<m$ \emph{or} $p < n$ and $q \leq m$. That is, in these cases,
$$
\K_{p} = \frac{1}{N^{p}}\bfu^T \bfW^\theta_{p} \bfu \ \text{(by Eq.~\eqref{E:explicit1})}\quad\text{and}\quad \K_{p,q} = \frac{1}{N^{p+q}} \bfu^T \bfW^\theta_{p}\bfTheta\bfW^\theta_{q}\bfu.
$$

Making the corresponding substitutions in Eq.~\eqref{E:cum_mom_relation}, the only term we have not accounted for in matching the right-hand side of Eq.~\eqref{E:cum_mom_relation} to that of Eq.~\eqref{E:recursive2} of the main text are the terms corresponding to the pair of compositions $\{n\},\{m\}$. In Eq.~\eqref{E:recursive2} of the main text, the corresponding terms are
\begin{equation}\label{E:cmr_term1}
\K_{n,m} + \K_n \K_m
\end{equation}
while in Eq.~\eqref{E:cum_mom_relation}, the terms take the form
\begin{equation}\label{E:cmr_term2}
\begin{split}
\frac{1}{N^{n+m}}  \bfu^T \bfW^\theta_{n}(\bfTheta+\bfu\bfu^T) \bfW^\theta_{m}\bfu &= \frac{1}{N^{n+m}} \bfu^T \bfW^\theta_n \bfTheta \bfW^\theta_m \bfu +  \left(\frac{1}{N^{n}} \bfu^T \bfW^\theta_n \bfu\right)\left(\frac{1}{N^m}\bfu^T \bfW^\theta_m \bfu\right)\\
&= \frac{1}{N^{n+m}} \bfu^T \bfW^\theta_n \bfTheta \bfW^\theta_m \bfu + \K_n \K_m,
\end{split}
\end{equation}
where the second equality follows from the inductive assumption. Comparing Eqs.~(\ref{E:cmr_term1},\ref{E:cmr_term2}) gives that 
$$
\K_{n,m} =  \frac{1}{N^{n+m}} \bfu^T \bfW^\theta_n \bfTheta \bfW^\theta_m \bfu ,
$$
which is exactly Eq.~\eqref{E:explicit2}, completing the inductive proof.

%\subsection{Expression of $f(\pi)$}
%\label{S:exp_f_coef}
%
%Let $\nu$ be the set of unique $B$'s in $\pi$, and for every $B\in \nu$, $\#B$ is the number of repetitions of $B$ in $\pi$. 
%\beq
%f(\pi)=\left(\sum_{B\in \pi} B\right)! \cdot \left(\prod_{B\in \pi}  B! \right)^{-1}
%\cdot  \left(\prod_{B\in \pi}  \#B! \right)^{-1}.
%\eeq

\subsection{Proof of  the theorem on multi-branch motifs}
\label{S:proof_hoc_rsm}

\begin{proof}[Proof of Theorem~\ref{TH:resumming_k}]

First, we rewrite the LHS of Eq.~\eqref{e.thmsum}, to explicitly account for cases with different $n_j$ nonzero.  Specifically, we sum over all possible sets of $k \le k'$ indices $\{j_1,\cdots,j_k\}$ corresponding to the nonzero values of $n_j$:
\beq
\sum_{n_1,\cdots,n_{k^\prime}=0}^{\infty} \M_{n_1,\cdots,n_{k^\prime}} 
=
\sum_{\substack{ \{j_1,\cdots,j_k\} \\\subset \{1,\cdots, k^\prime\}}} \sum_{n_{j_1},\cdots,n_{j_k}=1}^{\infty} \M_{n_{j_1},\cdots,n_{j_k}}
\label{e.nonzero}
\eeq

We now focus on a fixed $k$, and without loss of generality, let $\{j_1,\cdots,j_k\} =\{1,\cdots,k\}$.  Applying Eq.~\eqref{E:k_cumulant_exp}, we have
\beqrn
\sum_{n_1,\cdots,n_k=1}^{\infty} \M_{n_1,\cdots,n_k}&=&
\sum_{n_1,\cdots,n_k=1}^{\infty}\sum_{\pi_1,\cdots,\pi_k}\left(\prod_{i=1}^k
 \prod_{j=2}^{t_i}\K_{B^i_j}\right) 
 \left( \sum_{\chi} \prod_{\{i_1,\cdots,i_s\} \in \chi} \K_{B_1^{i_1},\cdots, B_1^{i_s}} \right)\\
 &=&
 \sum_{\chi} \sum_{B_1^{i_1},\cdots ,B_1^{i_s} \ge 1}^{\infty} \prod_{\{i_1,\cdots,i_s\} \in \chi} \K_{B_1^{i_1},\cdots, B_1^{i_s}}
 \left(  \sum_{n_1,\cdots,n_k}^{\infty}\sum_{\pi_1,\cdots,\pi_k}
\left(\prod_{i=1}^k
 \prod_{j=2}^{t_i}\K_{B^i_j}\right)   \right).
\eeqrn
In the last equality, we switched the order of summations by pulling the sum over $\chi$ to the front. Consequently, for each fixed $\{B_1^{i_1},\cdots ,B_1^{i_s}\}$ taken in an outer sum, the $\{\pi_i\}$ and $\{n_i\}$ are restricted to terms that are possible for that $\{B_1^{i_1},\cdots ,B_1^{i_s}\}$. Notice that these sums can be factorized as
\beq
\sum_{n_1,\cdots,n_k}^{\infty}\sum_{\pi_1,\cdots,\pi_k}
\left(\prod_{i=1}^k
 \prod_{j=2}^{t_i}\K_{B^i_j}\right)  
 =\prod_{i=1}^k \left( \sum_{n_i \ge B^{i}_1}^\infty  \sum_{\pi_i \ni B_1^{i}} \prod_{j=2}^{t_i} \K_{B^i_j} \right).
\label{e.reordered}
\eeq

We next will simplify the factors on the RHS of Eq.~\eqref{e.reordered}.  First, we shift the (dummy) indices of summation and multiplication to explicitly begin counting at the second block in the branch:  
\beqrn
&&\sum_{n_i \ge B^{i}_1}^\infty  \sum_{\pi_i \ni B_1^{i}} \prod_{j=2}^{t_i} \K_{B^i_j}
=\sum_{n_i^\prime=0}^\infty  \sum_{\pi_i^\prime } \prod_{j=1}^{t_i^\prime} \K_{B^i_{j+1}},
\eeqrn
where $n_i^\prime=n_i-B^i_1$, $\pi_i^\prime=\pi_i\backslash \{B_1^i\}$ and $t_i^\prime=t_i-1$ as we exclude the $B_{1}^i$ component. For simplicity, we will drop the primes in the summation indices, and then let $B^{i}_j$ range over the components of the resulting partition (thus rewriting $j+1 \rightarrow j$ below).  Doing this, and further rearranging the terms, we have
\beqrn
&&\sum_{n_i=0}^\infty  \sum_{\pi_i } \prod_{j=1}^{t_i} \K_{B^i_j}
=\sum_{t_i=0}^\infty  \sum_{n_i\ge t_i}\sum_{\pi_i}  \prod_{j=1}^{t_i}\K_{B^i_j}\\
&=&\sum_{t_i=0}^\infty  \prod_{j=1}^{t_i} \sum_{B^i_j=1}^\infty \K_{B^i_j}
=\sum_{t_i=0}^\infty  \left( \sum_{n_i=1}^\infty \K_{n_i} \right)^{t_i}
=\left(1-\sum_{n=1}^{\infty} \K_{n} \right)^{-1},
\eeqrn
where we have summed the geometric series in the last inequality (note the convergence criterion in the Theorem statement).

Therefore, Eq.~\eqref{e.reordered} and the expression above it yield
\beqrn
\sum_{n_1,\cdots,n_k=1}^{\infty} \M_{n_1,\cdots,n_k} 
&=&
 \sum_{\chi} \sum_{B_1^{i_1},\cdots ,B_1^{i_s} \ge 1}^{\infty} \prod_{\{i_1,\cdots,i_s\} \in \chi} \K_{B_1^{i_1},\cdots, B_1^{i_s}}
\left(1-\sum_{n=1}^{\infty} \K_{n} \right)^{-k}\\
&=&  \left(1-\sum_{n=1}^{\infty} \K_{n} \right)^{-k} 
\sum_{\chi} 
\prod_{\{i_1,\cdots,i_s\} \in \chi} 
\left(   \sum_{B_1^{i_1},\cdots ,B_1^{i_s} \ge 1}^{\infty}   \K_{B_1^{i_1},\cdots, B_1^{i_s}} \right) \\
&=&  \left(1-q \right)^{-k} 
\sum_{\chi} 
\prod_{\{i_1,\cdots,i_s\} \in \chi} 
\left(   \sum_{B_1^{i_1},\cdots ,B_1^{i_s} \ge 1}^{\infty}   \K_{B_1^{i_1},\cdots, B_1^{i_s}} \right), 
\eeqrn
where we let $q=\sum_{n=1}^\infty \K_n$.  The above gives a useful expression for the sum over all motifs with exactly $k$ branches of nonzero length.  To establish the theorem, we use this expression for different subsets of $\{1,\cdots, k^\prime\}$ (hence different $k$) that occur in Eq.~\eqref{e.nonzero}. Doing this, we have
\beqrn
&&\sum_{n_1,\cdots,n_{k^\prime}=0}^{\infty} \M_{n_1,\cdots,n_{k^\prime}} 
=
\sum_{\substack{ \{j_1,\cdots,j_k\} \\\subset \{1,\cdots, k^\prime\}}}  
\sum_{\chi_k} 
\left(1-q\right)^{-k}
\prod_{\{i_1,\cdots,i_s\} \in \chi_k} 
\left(\sum_{B_1^{i_1},\cdots ,B_1^{i_s} \ge 1}^{\infty}   \K_{B_1^{i_1},\cdots, B_1^{i_s}} \right) \\
\eeqrn
where $\chi_k$ is a partition of the set $\{j_1,\cdots,j_k\}$ (through we only use the subscript $k$, $\chi_k$ should actually depend on the set $\{j_1,\cdots,j_k\}$). We next rearrange this expression.  First, we define a lift of each partition $\chi_k$ to a partition $\chi$ of the set $\{1,\cdots,k^\prime\}$, by adding any indices not present in $\chi_k$ as individual groups $\{i_{k+1}\}, \cdots, \{i_{k^\prime}\}$. Next we split the sum across $\{j_1,\cdots,j_k\}$ and $\chi_k$ according to their resulting lift $\chi$.  This creates an outer sum; here, the range of $\chi$ is all possible partitions of $\{1,\cdots,k^\prime\}$.  Thus, the expression above
\beqrn
&=&
\sum_{\chi} 
\sum_{\chi_k \vert \chi}
(1-q)^{-k}
\prod_{\substack{\{i_1,\cdots,i_s\} \in \chi_k}} 
\left(  \sum_{B_1^{i_1},\cdots ,B_1^{i_s} \ge 1}^{\infty}   \K_{B_1^{i_1},\cdots, B_1^{i_s}} \right)
\eeqrn
The inner sum is over all $\chi_k$, $0\leq k \leq k^\prime$ whose lift is $\chi$. We can pull out all factors associated with groups in $\chi_k$ that has only 1 element. Note each of such group $\{i_r\}$ corresponds to a factor $\sum^\infty_{B^{i_r}_1\ge 1} \K_{B^{i_r}_1}=q$. Therefore the rest factors in $\prod_{\{i_1,\cdots,i_s\} \in \chi_k}$ are $(k-m_2)$ $q$-factors, where $m_2$ is the number of indices that are partitioned into a group with more than 1 elements in $\chi_k$ (or $\chi$).  Thus, the expression above
\beqrn
&=&
\sum_{\chi} 
\sum_{\chi_k \vert \chi}
(1-q)^{-k} q^{k-m_2}
\prod_{\substack{\{i_1,\cdots,i_s\} \in \chi_k,\\ s>1}} 
\left(  \sum_{B_1^{i_1},\cdots ,B_1^{i_s} \ge 1}^{\infty}   \K_{B_1^{i_1},\cdots, B_1^{i_s}} \right)\\
&=&
\left(1-q \right)^{-k^\prime} 
\sum_{\chi} 
\prod_{\substack{\{i_1,\cdots,i_s\} \in \chi,\\ s>1}} 
\left(  \sum_{B_1^{i_1},\cdots ,B_1^{i_s} \ge 1}^{\infty}   \K_{B_1^{i_1},\cdots, B_1^{i_s}} \right)
\cdot
\left(
 \sum_{\chi_k \vert \chi} 
q^{k-m_2}  (1-q)^{k^\prime-k}
\right).
\eeqrn

For a fixed $k$, it is easy to see the number of $\chi_k$ whose lift being $\chi$ is $\binom{k^\prime-m_2}{k-m_2}$. Hence
\beq
\nonumber
 \sum_{\chi_k \vert \chi} 
q^{k-m_2}  (1-q)^{k^\prime-k}
=\sum_{k=0}^{k^\prime} 
 \binom{k^\prime-m_2}{k-m_2}
q^{k-m_2}  (1-q)^{k^\prime-k}=1,
\eeq
Finally, the expression above
\beqrn
&=&
\left(1-q \right)^{-k^\prime} 
\sum_{\chi} 
\prod_{\substack{\{i_1,\cdots,i_s\} \in \chi,\\ s>1}} 
\left(  \sum_{B_1^{i_1},\cdots ,B_1^{i_s} \ge 1}^{\infty}   \K_{B_1^{i_1},\cdots, B_1^{i_s}} \right)\\
&=&
\left(1-q \right)^{-k^\prime} 
\sum_{\pi}  f(\pi)
\prod_{B \in \pi, B>1} 
\left( \sum_{n_1,\cdots, n_B=1}^{\infty} \K_{n_1,\cdots,n_B}  \right).
\eeqrn

In the last line, since the factor with $\K_*$ is the same as long as $s$ is the same, regardless of actual value of $i_1,\cdots,i_s$, we switched from summing over set partitions $\chi$ to corresponding integer partitions $\pi$ of $k^\prime$.  This introduces the $f(\pi)$ factor and finishes the proof (see Eq.~\eqref{e.thmsum}).
\end{proof}

\subsection{Establishing the subpopulation cumulant Corollary  \eqref{E:rsm_sub}}
\label{S:sub_pop_rsm}

Beyond the similarity in appearance between the single population and subpopulation formulas Eq.~\eqref{e:rsm} and \eqref{E:rsm_sub}, these two can be precisely connected. Define a new product between two matrices (or tensors) as
\beq
(\bfA \odot \bfB)_{ij}=\sum_{k}\bfA_{ik} \bfB_{jk} \frac{N_k}{N}.
\eeq
It's easy to see that Eq.~\eqref{E:recursive1_sub} and \eqref{E:recursive2_sub} are equivalent to Eq.~\eqref{E:recursive1} and \eqref{E:recursive2} when the product is interpreted as $\odot$. Very much like the ordinary matrix multiplication, $\odot$ is noncummutative, but associative and distributive, which are all that we need for the theory. This shows that Eq.~\eqref{E:rsm_sub} can be proved by identically as the single population case Eq.~\eqref{e:rsm} while interpreting products via $\odot$.

\subsection{Subpopulation theory for higher order correlations}
\label{S:multi_HOC}

The idea in Section~\ref{S:sub_pop_rsm} is exactly how we will develop the subpopulation theory for higher order correlations. Under the interpretation of $\odot$, the relationship among subpopulation motif moments and cumulants
can be written as
\beq
\label{E:k_cumulant_exp_sub}
\M_{n_1,\cdots, n_k}= \sum_{\pi_1,\cdots,\pi_k}
\left(\bigotimes_{i=1}^k
 \prod_{j=2}^{t_i}\K_{B^i_j}\right) 
 \cdot
 \left( \sum_{\chi} \bigodot_{\{i_1,\cdots,i_s\} \in \chi} \K_{B_1^{i_1},\cdots, B_1^{i_s}, \cdot} \right)
\eeq
As before $\pi_i =\{B^i_1,\cdots,B^i_{t_i}\}$ is an ordered partition of $n_i$.  Moreover, $\chi$ is a partition of the set $\{1,\cdots, k\}$ and $\{i_1,\cdots,i_s\}$ is one set of indices that are grouped together under $\chi$. Here $\M_{n_1,\cdots,n_k}$ (for $k\ge 2$) is a $k$ dimensional tensor:  each entry $\M_{n_1,\cdots,n_k}^{\alpha_1,\cdots, \alpha_k}$ represents the frequency of a $k$ brach motif with endpoints in subpopulation $\alpha_1,\cdots,\alpha_k$ respectively. Notably, there is a third type of quantity appearing in Eq.~\eqref{E:k_cumulant_exp_sub}: $\K_{B_1^{i_1},\cdots, B_1^{i_s}, \cdot}$, which is a $s+1$ tensor ($s\ge 2$). The extra dimension (represented by the dot in subscript) comes from specifying the subpopulation of the root node, beside the subpopulation of the endpoints. This is the same situation as for one-branch or chain motifs $\M_{n_1}$ and $\K_{B_j^i}$, which are 2-tensors ($b\times b$ matrices) and should formally be written as $\M_{n_1,\cdot}$ and $\K_{B_j^i,\cdot}$;  we omit the dot for these chains as long as it is clear from the context.  
The big $\odot$ product forms an $k$ tensor out of $|\chi|$ factors, in a way similar to a multivariate trace:
\beq
\label{E:tr_N}
\left(\bigodot_{\{i_1,\cdots,i_s\} \in \chi} \K_{B_1^{i_1},\cdots, B_1^{i_s}, \cdot} \right)^{\alpha_1,\cdots, \alpha_k }=\sum_{\beta=1}^b \frac{N_\beta}{N} \prod_{\{i_1,\cdots,i_s\} \in \chi} 
\K_{B_1^{i_1},\cdots, B_1^{i_s},\cdot}^{\alpha_{i_1},\cdots,\alpha_{i_s},\beta}.
\eeq
As an example, if $\chi$ only contains one partition, that consisting of the set itself, we define $\K_{B_1^{1},\cdots, B_1^{k}}:=\bigodot \K_{B_1^{1},\cdots, B_1^{k}, \cdot}$. It's not hard to see that the meaning of the resulting $s$-tensor $\K_{B_1^{1},\cdots, B_1^{k}}$ is the motif cumulant with specified subpopulations for the endpoints. 

The tensor product ``$\bigotimes_{j=1}^{k}$" in Eq.~\eqref{E:k_cumulant_exp_sub} is simply a weighted version of the ordinary tensor product, that is
\beq
\left(\bigotimes_{j=1}^{k} \bfP_{j} \cdot \bfA \right)^{\alpha_1,\cdots,\alpha_k}
=\sum_{\beta_1,\cdots,\beta_k} 
\left(\prod_{j=1}^k  \frac{N_j}{N} (\bfP_j)_{\alpha_j,\beta_j} \right)
\bfA^{\beta_1,\cdots,\beta_k}.
\eeq

Despite the difference in notation between Eq.~\eqref{E:k_cumulant_exp_sub} and \eqref{E:k_cumulant_exp}, the operations share some basic algebraic properties, namely being associative and distributive --- which are all that is needed in the proof of Eq.~\eqref{E:rsm_hoc}. This allows us to derive, with identical arguments, the subpopulation result:
\begin{corollary}
\label{c:rsm_hoc_sub}
\beq
\label{E:rsm_hoc_sub}
\begin{split}
&\frac{\langle\bfS_{y[k]} \rangle_B}{S_{x[k]}}
=\frac{1}{N^{k-1}} 
\bigotimes_{i=1}^k \left(\bfI-\sum_{n=1}^{\infty} g^n \K_{n} \right)^{-1}
\cdot
\left( \sum_{\chi} 
\bigodot_{\{i_1\} \in \chi} \bfE^{-1} 
\bigodot_{\{i_1,\cdots,i_s\} \in \chi, s>1} 
\left( \sum_{n_1,\cdots,n_s=1}^\infty  g^{\vert n \vert}  \K_{n_1,\cdots, n_s, \cdot} \right)
\right).
\end{split}
\eeq
\end{corollary}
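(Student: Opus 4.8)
The proof of Corollary~\ref{c:rsm_hoc_sub} follows the two-stage plan of the scalar case: first expand the block-averaged polyspectra in subpopulation motif moments, then resum the resulting series using the combinatorial moment--cumulant relation. The only new ingredient is notational — every scalar product is replaced by the weighted matrix/tensor operations $\odot$, $\otimes$ and the multivariate trace $\bigodot$ of Eq.~\eqref{E:tr_N} — and the whole point is that these retain the algebraic properties actually used. For the first stage I would expand each propagation factor $\bfPt=\sum_{n\ge 0}(\At\bfW)^n$ in Eq.~\eqref{e:hoc}, set $\om_1=\cdots=\om_{k-1}=0$, and apply the block average $\langle\cdot\rangle_B=\bfD\bfU^T(\cdot)\bfU\bfD$ entrywise along the $k$ endpoint indices, exactly as in Corollary~\ref{TH:cumulant_multi}; each term then collapses onto a subpopulation motif-moment tensor $\M_{n_1,\cdots,n_k}$ weighted by $g^{|n|}$, giving $\langle\bfS_{y[k]}\rangle_B/S_{x[k]}=N^{-(k-1)}\sum_{n_1,\cdots,n_k=0}^\infty g^{|n|}\M_{n_1,\cdots,n_k}$. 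Here the empty motif $\M_{0,\cdots,0}$ is the tensor analog of $\bfE^{-1}=\mathrm{diag}\{N/N_1,\cdots,N/N_b\}$ rather than the scalar $1$; this is the source of the $\bigodot_{\{i_1\}\in\chi}\bfE^{-1}$ factors in Eq.~\eqref{E:rsm_hoc_sub}, one per singleton block, and is checked exactly as for the second-order formula Eq.~\eqref{E:rsm_sub}.

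For the second stage I would combine two facts already recorded in the excerpt: Eq.~\eqref{E:k_cumulant_exp_sub} is literally Eq.~\eqref{E:k_cumulant_exp} with products reinterpreted via $\odot$, $\otimes$, $\bigodot$ (these being associative and distributive, as in Appendix~\ref{S:sub_pop_rsm}), and Eq.~\eqref{E:k_cumulant_exp_sub} is homogeneous in degree, hence still holds after the uniform rescaling $\M_*\mapsto g^{|*|}\M_*$, $\K_*\mapsto g^{|*|}\K_*$. Since the proof of Theorem~\ref{TH:resumming_k} in Appendix~\ref{S:proof_hoc_rsm} uses nothing about its products beyond associativity, distributivity, and summation of the geometric series $\sum_{n\ge1}g^n\K_n$, I would re-run that argument verbatim on the rescaled tensors with the weighted operations: the split over which $n_j$ are nonzero, the interchange pulling $\sum_\chi$ to the front, the per-branch factorization (analog of Eq.~\eqref{e.reordered}), and the resummation of each branch into $(\bfI-\sum_{n\ge1}g^n\K_n)^{-1}$ all survive, with the convergence hypothesis becoming $\Psi(\sum_{n\ge1}g^n\K_n\bfE)<1$. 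The one step I would deliberately \emph{omit} is the final collapse from set partitions $\chi$ to integer partitions $\pi$: the tensor factors $\K_{B_1^{i_1},\cdots,B_1^{i_s},\cdot}$ genuinely depend on which branches are grouped, so no Fa\`{a} di Bruno coefficient appears and the result is stated directly over $\chi$, yielding Eq.~\eqref{E:rsm_hoc_sub}.

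The main obstacle will be bookkeeping the weights rather than any conceptual point: one must check that the $N_\beta/N$ factors hidden in $\odot$ and $\bigodot$, the $\bfE$ inserted at each break within a branch, and the $\bfE^{-1}$ attached at the root of each singleton block combine so that every term of the expanded product reproduces the probabilistic normalization of the corresponding subpopulation motif moment. Concretely, one must confirm that $\bigodot$ is multilinear in each of its tensor slots in the way required for the per-branch factorization — so the geometric resummation of an individual branch can be pulled out through the multivariate trace — and that the partition-lifting step at the end of the scalar argument still produces the trivial identity $\sum_k\binom{k'-m_2}{k-m_2}q^{k-m_2}(1-q)^{k'-k}=1$ now that $q=\sum_{n\ge1}g^n\K_n$ is matrix-valued. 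With this weight-compatibility confirmed, the corollary follows by transcription of the single-population proof of Eq.~\eqref{E:rsm_hoc}.
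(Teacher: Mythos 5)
Your proposal matches the paper's own argument: Appendix~\ref{S:multi_HOC} proves the corollary exactly by observing that Eq.~\eqref{E:k_cumulant_exp_sub} is Eq.~\eqref{E:k_cumulant_exp} with products reinterpreted via $\odot$, $\otimes$ and the trace of Eq.~\eqref{E:tr_N}, and that associativity and distributivity are all the proof of Theorem~\ref{TH:resumming_k} uses, so that proof transcribes verbatim. You also correctly anticipate the two points the paper leaves implicit --- that the identity of the $\odot$-algebra is $\bfE^{-1}$ (whence the singleton factors) and that the final collapse from set partitions $\chi$ to integer partitions $\pi$ must be dropped because the tensor cumulants depend on which branches are grouped.
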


Here, the two ``$\bigodot$" terms are actually two parts of one single product associated with $\chi$, as defined in Eq.~\eqref{E:tr_N}.  Specifically:
\beqrn
&&
\left(
\bigodot_{\{i_1\} \in \chi} \bfE^{-1} 
\bigodot_{\{i_1,\cdots,i_s\} \in \chi, s>1} 
\left( \sum_{n_1,\cdots,n_s=1}^\infty  g^{\vert n \vert}  \K_{n_1,\cdots, n_s, \cdot} \right) \right)
^{\alpha_1,\cdots,\alpha_k}\\
&=&
\sum_{\beta=1}^b \frac{N_\beta}{N} \prod_{\{ i_1\} \in \chi} (\bfE^{-1})_{\alpha_{i_1} \beta} \prod_{\{i_1,\cdots,i_s\} \in \chi} 
\K_{B_1^{i_1},\cdots, B_1^{i_s},\cdot}^{\alpha_{i_1},\cdots,\alpha_{i_s},\beta}
\eeqrn
We emphasize again that all multiplicative operations in the formula above should be interpreted as for $\odot$.  

However, it is also easy to rewrite this expression using only ordinary products, by inserting the diagonal scaling matrix $\bfE$. For example, enumerating the terms for  third order correlation ($k=3$) yields
\beq\label{E:rsm_sub_3rd}
\begin{split}
\langle \bfS_{y[3]} \rangle_B/ S_{x[3]}
&=\frac{1}{N^2}\left(\bfI - \sum_{l=1}^\infty g^l \K_l \bfE\right)^{-1}
 \otimes
\left(\bfI - \sum_{m=1}^\infty g^m \K_m \bfE\right)^{-1}
\otimes
\left(\bfI - \sum_{n=1}^\infty g^n \K_n \bfE\right)^{-1}\\
&\cdot \left(\bfE^{-2}_{[3]}
+ \sum_{l,m=1}^\infty g^{l+m} (\K_{l,m, \cdot}+\K_{l, \cdot, m}+\K_{ \cdot,l,m})
+\sum_{l,m,n=1}^\infty g^{l+m+n} \K_{l,m,n} \right).
\end{split}
\eeq
Here $\bfE^{-2}_{[3]}$ is a diagonal 3-tensor, with $(\bfE^{-2}_{[3]})^{\alpha \beta \gamma}=\delta_{\alpha \beta\gamma}\left( \frac{N_{\alpha}}{N}\right)^{-2}$. $\K_{l,\cdot,m}$ and $\K_{\cdot,l,m}$ are transpositions of the tensor $\K_{l,m,\cdot}$, i.e. $(\K_{l,m,\cdot})^{i_1,i_2,i_3}=(\K_{l,\cdot,m})^{i_1,i_3,i_2}=(\K_{\cdot,l,m})^{i_3,i_1,i_2}$.

\bibliographystyle{aip_title.bst}
\bibliography{hu_trousdale_local_paths}

\begin{thebibliography}{10}

\bibitem{Bonifazi:09}
P.~Bonifazi et~al.,
\newblock {\em The Neuroscientist Comments}, Science (New York, NY) {\bf 326},
  1419 (2009).

\bibitem{Song:2005}
S.~Song, P.~J. Sj\"{o}str\"{o}m, M.~Reigl, S.~Nelson, and D.~B. Chklovskii,
\newblock {\em Highly nonrandom features of synaptic connectivity in local
  cortical circuits}, PLoS. Biol. {\bf 3}, e68 (2005).

\bibitem{Perin:2011}
R.~Perin, T.~K. Berger, and H.~Markram,
\newblock {\em A synaptic organizing principle for cortical neuronal groups},
  Proc. Natl. Acad. Sci. USA {\bf 108}, 5419 (2011).

\bibitem{Milo:2004jc}
R.~Milo et~al.,
\newblock {\em {Superfamilies of evolved and designed networks.}}, Science (New
  York, NY) {\bf 303}, 1538 (2004).

\bibitem{Larimer:2008}
P.~Larimer and B.~W. Strowbridge,
\newblock {\em Nonrandom local circuits in the dentate gyrus}, J. Neurosci.
  {\bf 28}, 12212 (2008).

\bibitem{Pecora:1998uf}
L.~Pecora and T.~Carroll,
\newblock {\em {Master Stability Functions for Synchronized Coupled Cell
  Systems}}, Phys. Rev. Lett. {\bf 80}, 2109 (1998).

\bibitem{SS00}
S.~Strogatz,
\newblock {\em From {K}uramoto to {C}rawford: {E}xploring the onset of
  synchronization in populations of coupled oscillators}, Physica. D. {\bf
  143}, 1 (2000).

\bibitem{rinzel89}
J.~Rinzel and G.~B. Ermentrout,
\newblock {Analysis of neural excitability and oscillations},
\newblock in {\em Methods in neuronal modeling,}, edited by C.~Koch and
  I.~Segev, pages 135--169, MIT Press, 1989.

\bibitem{corr_def}
We quantify second order correlations between the activity of two nodes $i,j$
  using the cross-covariance functions, defined by
  $(\mathbf{C}_y(\tau))_{ij}=\mathbf{E}\left[(y_i(t)-\mathbf{E}\left[y_i(t)\right])(y_j(t+\tau)-\mathbf{E}\left[y_j(t+\tau)\right]
  )\right]$ .

\bibitem{Renart:2010}
A.~Renart et~al.,
\newblock {\em The asynchronous state in cortical circuits}, Science {\bf 327},
  587 (2010).

\bibitem{Pernice:2011}
V.~Pernice, B.~Staude, S.~Cardanobile, and S.~Rotter,
\newblock {\em How Structure Determines Correlations in Neuronal Networks},
  PLoS. Comput. Biol. {\bf 7}, e1002059 (2011).

\bibitem{Pernice:2012}
V.~Pernice, B.~Staude, S.~Cardanobile, and S.~Rotter,
\newblock {\em Recurrent interactions in spiking networks with arbitrary
  topology}, Phys. Rev. E. {\bf 85} (2012).

\bibitem{Lindner:2005}
B.~Lindner, B.~Doiron, and A.~Longtin,
\newblock {\em Theory of oscillatory firing induced by spatially correlated
  noise and delayed inhibitory feedback}, Phys. Rev. E. {\bf 72}, 1 (2005).

\bibitem{Schneidman:2003ej}
E.~Schneidman, S.~Still, M.~J. Berry, and W.~Bialek,
\newblock {\em {Network Information and Connected Correlations}}, Phys.Rev.
  Lett. {\bf 91}, 238701 (2003).

\bibitem{Fries:2005}
P.~Fries,
\newblock {\em A mechanism for cognitive dynamics: neuronal communication
  through neuronal coherence}, Trends Cogn Sci {\bf 9}, 474 (2005).

\bibitem{Sin+95}
W.~Singer and C.~M. Gray,
\newblock {\em Visual Feature Integration and the Temporal Correlation
  Hypothesis}, Annual Review of Neuroscience {\bf 18}, 555 (1995).

\bibitem{Salinas:2000}
E.~Salinas and T.~J. Sejnowski,
\newblock {\em Impact of Correlated Synaptic Input on Output Firing Rate and
  Variability in Simple Neuronal Models}, J. Neurosci. {\bf 20}, 6193 (2000).

\bibitem{Die+99}
M.~Diesmann, M.-O. Gewaltig, and A.~Aertsen,
\newblock {\em Stable propagation of synchronous spiking in cortical neural
  networks}, Nature {\bf 402}, 529 (1999).

\bibitem{piko01}
A.~Pikovsky, M.~Rosenblum, and J.~Kurths,
\newblock {\em Synchronization: A Universal Concept in Nonlinear Sciences},
\newblock Cambridge University Press, Cambridge, 2001.

\bibitem{Aver:2006}
B.~B. Averbeck, P.~E. Latham, and A.~Pouget,
\newblock {\em Neural correlations, population coding and computation}, Nat.
  Rev. Neurosci. {\bf 7}, 358 (2006).

\bibitem{gawne93}
T.~Gawne and B.~Richmond,
\newblock {\em {How independent are the messages carried by adjacent inferior
  temporal cortical neurons?}}, Journal of Neuroscience {\bf 13}, 2758 (1993).

\bibitem{Cohen:2011eh}
M.~R. Cohen and A.~Kohn,
\newblock {\em {Measuring and interpreting neuronal correlations}}, Nature
  Neuroscience {\bf 14}, 811 (2011).

\bibitem{Zohary:1994ei}
E.~Zohary, M.~N. Shadlen, and W.~T. Newsome,
\newblock {\em {Correlated neuronal discharge rate and its implications for
  psychophysical performance.}}, Nature {\bf 370}, 140 (1994).

\bibitem{Sompolinsky:2001hh}
H.~Sompolinsky, H.~Yoon, K.~Kang, and M.~Shamir,
\newblock {\em {Population coding in neuronal systems with correlated noise}},
  Physical Review E {\bf 64}, 051904 (2001).

\bibitem{Abbott:1999ul}
L.~F. Abbott and P.~Dayan,
\newblock {\em {The effect of correlated variability on the accuracy of a
  population code.}}, Neural Computation {\bf 11}, 91 (1999).

\bibitem{Ginzburg:1994wm}
I.~Ginzburg and H.~Sompolinsky,
\newblock {\em {Theory of correlations in stochastic neural networks}}, Phys.
  Rev. E. {\bf 50}, 3171 (1994).

\bibitem{Sej76}
T.~Sejnowski,
\newblock {\em On the stochastic dynamics of neuronal interaction}, Biol.
  Cybern. {\bf 22}, 203 (1976).

\bibitem{Trousdale:2012}
J.~Trousdale, Y.~Hu, E.~Shea-Brown, and K.~Josi\'{c},
\newblock {\em Impact of Network Structure and Cellular Response on Spike Time
  Correlations}, PLoS. Comput. Biol. {\bf 8}, e1002408 (2012).

\bibitem{Hu:2012vh}
Y.~Hu, J.~Trousdale, K.~Josi{\'c}, and E.~Shea-Brown,
\newblock {\em {Motif Statistics and Spike Correlations in Neuronal Networks}},
  J. Stat. Mech. {\bf P03012} (2013).

\bibitem{Gardiner:1986}
C.~W. Gardiner,
\newblock {\em Handbook of Stochastic Methods for Physics, Chemistry and the
  Natural Sciences},
\newblock Springer-Verlag, Berlin, 2009.

\bibitem{Hawkes:1971-2}
A.~G. Hawkes,
\newblock {\em Point spectra of some mutually exciting point processes}, J.
  Roy. Statist. Soc. Ser. B. {\bf 33}, 438 (1971).

\bibitem{Pedraza:05}
J.~M. Pedraza and A.~van Oudenaarden,
\newblock {\em Noise propagation in gene networks}, Science (New York, NY) {\bf
  307}, 1965 (2005).

\bibitem{Tomioka:2004bs}
R.~Tomioka, H.~Kimura, T.~J~Kobayashi, and K.~Aihara,
\newblock {\em {Multivariate analysis of noise in genetic regulatory
  networks}}, J. Theor. Biol. {\bf 229}, 501 (2004).

\bibitem{Lestas:hy}
I.~Lestas, J.~Paulsson, N.~E. Ross, and G.~Vinnicombe,
\newblock {\em {Noise in Gene Regulatory Networks}}, IEEE. T. Automat. Contr.
  {\bf 53}, 189 (2008).

\bibitem{Warren:2005vh}
P.~B. Warren, S.~Tanase-Nicola, and P.~R. Wolde,
\newblock {\em {Exact results for noise power spectra in linear biochemical
  reaction networks}}, arXiv preprint q-bio/0512041  (2005).

\bibitem{ft_sp_def}
Spectral quantities of stochastic processes are technically defined by
  considering first the transform over a finite window - i.e.,
  $\tilde{y}(\omega) = \int_0^{T} dt e^{-2\pi i \omega t}(y_i(t) -
  \mathbf{E}\left[y_i(t)\right])$. {T}he spectrum, for instance, is given by
  ${S}_y(\omega) = \lim_{T\rightarrow\infty}
  \frac{1}{T}\mathbf{E}\left[\bar{\tilde{y}}\tilde{y}(\omega)\right]$ .

\bibitem{Laing:2009tw}
C.~Laing and G.~J. Lord,
\newblock {\em {Stochastic Methods in Neuroscience}},
\newblock Oxford University Press, 2009.

\bibitem{Wang:1987tc}
Y.~J. Wang and G.~Y. Wong,
\newblock {\em {Stochastic blockmodels for directed graphs}}, J. Am. Statist.
  Assoc. {\bf 82}, 8 (1987).

\bibitem{Daudin:2008}
J.~J. Daudin, F.~Picard, and S.~Robin,
\newblock {\em A mixture model for random graphs}, Stat. Comput. {\bf 18}, 173
  (2008).

\bibitem{LitwinKumar:2012ue}
A.~Litwin-Kumar and B.~Doiron,
\newblock {\em {Slow dynamics and high variability in balanced cortical
  networks with clustered connections}}, Nat. Neurosci.  (2012).

\bibitem{Prettejohn:2011tu}
B.~J. Prettejohn, M.~J. Berryman, and M.~D. McDonnell,
\newblock {\em {Methods for generating complex networks with selected
  structural properties for simulations: a review and tutorial for
  neuroscientists}}, Front. Comput. Neurosci. {\bf 5} (2011).

\bibitem{Horn:1990}
R.~A. Horn and C.~R. Johnson,
\newblock {\em Matrix Analysis},
\newblock Cambridge University Press, 1990.

\bibitem{Rajan:2006}
K.~Rajan and L.~F. Abbott,
\newblock {\em Eigenvalue spectra of random matrices for neural networks}, Phys
  Rev Lett {\bf 97}, 188104 (2006).

\bibitem{Brillinger:1964ub}
D.~R. Brillinger,
\newblock {An introduction to polyspectra}, 1964.

\bibitem{Kim:1979}
Y.~C. Kim and E.~Powers,
\newblock {\em Digital Bispectral Analysis and Its Applications to Nonlinear
  Wave Interactions}, Plasma Science, IEEE Transactions on {\bf 7}, 120 (1979).

\bibitem{Huber:1971}
P.~Huber, B.~Kleiner, T.~Gasser, and G.~Dumermuth,
\newblock {\em Statistical methods for investigating phase relations in
  stationary stochastic processes}, Audio and Electroacoustics, IEEE
  Transactions on {\bf 19}, 78 (1971).

\bibitem{Alon:2007uu}
U.~Alon,
\newblock {\em {Network motifs: theory and experimental approaches}}, Nat. Rev.
  Genet. {\bf 8}, 450 (2007).

\bibitem{Pfister:2006us}
J.-P. Pfister and W.~Gerstner,
\newblock {\em {Triplets of spikes in a model of spike timing-dependent
  plasticity}}, The Journal of neuroscience : the official journal of the
  Society for Neuroscience {\bf 26}, 9673 (2006).

\bibitem{Zylberberg:2012ty}
J.~Zylberberg and E.~Shea-Brown,
\newblock {\em {Input nonlinearities shape beyond-pairwise correlations and
  improve information transmission by neural populations}}, arXiv preprint
  arXiv:1212.3549  (2012).

\bibitem{HuZSB:2013}
Y.~Hu, J.~Zylberberg, and E.~Shea-Brown,
\newblock {\em The sign rule and beyond: Boundary effects, flexibility, and
  noise correlations in neural population codes}, arXiv preprint
  q-Bio/1307.3235  (2013).

\end{thebibliography}

\end{document}